\documentclass[fleqn,11pt]{article}
\usepackage{amsthm,amsmath,amssymb}
\usepackage{geometry}
\usepackage{dsfont}
\usepackage{float}
\usepackage{natbib}
\usepackage{graphicx}
\usepackage{tikz}
\usepackage{pgfplots}
\pgfplotsset{compat=1.18}
\usepgfplotslibrary{groupplots}
\newcommand{\largestSizeDrawn}{30}

\definecolor{plannerGray}{gray}{0.55}
\definecolor{policyOrange}{RGB}{186,95,18}

\pgfplotsset{
  exampleTwoAxis/.style={
    width=\linewidth,
    height=0.34\textheight,
    axis lines*=left,
    tick align=outside,
    scaled y ticks=false,
    scaled x ticks=false,
    yticklabel style={/pgf/number format/fixed},
    label style={font=\small},
    tick label style={font=\small},
    legend style={font=\small, draw=none, fill=none, cells={anchor=west}},
    legend cell align=left,
  },
  plannerSettled/.style={plannerGray, line width=2.4pt, dashed,
    mark=none},
  plannerLine/.style={plannerGray, line width=2.4pt, mark=none},
  marketSettled/.style={black, line width=0.8pt, dashed, mark=none},
  marketLine/.style={black, line width=0.8pt, mark=none},
  solvedFeePoint/.style={policyOrange, only marks, mark=*, mark size=1.5pt},
}

\newlength{\figurenoteindent}
\AtBeginDocument{\setlength{\figurenoteindent}{\parindent}}
\newcommand{\figurenote}[1]{%
  \par\medskip
  \noindent\begin{minipage}{\linewidth}%
    \setlength{\parindent}{\figurenoteindent}%
    \footnotesize #1\par
  \end{minipage}}

\newcommand{\plannerSurplus}{0.150100213388556}
\newcommand{\laissezFaireSurplus}{0.0795715682512045}

\newcommand{\regulationSurplus}{0.0608544775508946}
\newcommand{\matureSizeMedian}{1.60103221856768}
\newcommand{\matureSizeSeventyFifth}{2.66583633148742}
\newcommand{\matureSizeNinetieth}{4.99306145305412}
\newcommand{\matureSizeNinetyFifth}{8.31381434585314}
\newcommand{\matureSizeNinetyEighth}{15.5716183502406}
\newcommand{\matureSizeNinetyNinth}{25.927889180934}
\newcommand{\gapOpensSize}{2.12758686885383}
\newcommand{\gapOpensPercentile}{66.6494790552953}
\newcommand{\bestEntryFee}{19}
\newcommand{\bestEntryFeeSurplus}{0.14778606511054}
\newcommand{\largestEntryFee}{60}
\newcommand{\lowestBeatingFee}{2}
\newcommand{\highestBeatingFee}{60}
\newcommand{\marketsSimulated}{1000000}
\newcommand{\poolWorkers}{18}
\newcommand{\equilibriaTimed}{61}
\newcommand{\secondsPerEquilibrium}{10.2045936099727}
\newcommand{\secondsPerValuation}{78.3402899002732}
\newcommand{\projectedDays}{0.656159568542482}

\newcommand{\annualCostOfCapital}{20}
\newcommand{\grossAnnualFactor}{1.2}
\newcommand{\discountFactor}{0.984921375260889}
\newcommand{\developmentMonths}{36}
\newcommand{\simulationMonths}{1200}
\newcommand{\demandIntercept}{1}
\newcommand{\growthStep}{1.04}
\newcommand{\growthPercent}{4}
\newcommand{\growthProbability}{0.5}
\newcommand{\marketSizes}{240}
\newcommand{\technologyTypes}{30}
\newcommand{\costDecline}{0.04}
\newcommand{\advanceProbability}{0.5}
\newcommand{\entrantMarginalCost}{1}
\newcommand{\frontierMarginalCost}{0.313486180882605}
\newcommand{\medianContinuationCost}{1}
\newcommand{\lowestContinuationCost}{0.7937005259841}
\newcommand{\exitElasticity}{3}
\newcommand{\monthsAfterMaturity}{36}

\newcommand{\valueToleranceExponent}{-40}
\newcommand{\discountAtHorizon}{1.20746734724139e-08}
\newcommand{\truncationBound}{1.93869111622051e-05}
\newcommand{\lernerTarget}{0.4}
\newcommand{\demandSlope}{0.154651051841072}
\newcommand{\costCurvature}{0.0389081433841241}

\newcommand{\solvingProcessor}{Apple M5 Pro}
\newcommand{\solvingSockets}{1}
\newcommand{\solvingCoresPerSocket}{18}

\newcommand{\largestAccumulatedStandardError}{0.000259469207805735}
\newcommand{\largestFirmCountStandardError}{0.00895629169129435}
\newcommand{\largestLeaderCostStandardError}{0.000224779676832515}

\usepackage{siunitx}
\newcommand{\surplus}[1]{%
  \num[round-mode = places, round-precision = 4]{#1}}

\newcommand{\parameter}[1]{%
  \num[round-mode = figures, round-precision = 3]{#1}}

\newcommand{\chosen}[1]{\num{#1}}

\newcommand{\marketsize}[1]{%
  \num[round-mode = figures, round-precision = 4]{#1}}

\newcommand{\scientific}[1]{%
  \num[round-mode = figures, round-precision = 3,
       exponent-mode = scientific]{#1}}

\newcommand{\counted}[1]{\num{#1}}

\newcommand{\probability}[1]{%
  \edef\probabilityReciprocal{\fpeval{1/(#1)}}%
  \ifnum\fpeval{round(\probabilityReciprocal) == \probabilityReciprocal}=1
    \ensuremath{1/\probabilityReciprocal}%
  \else\parameter{#1}%
  \fi}

\newcommand{\seconds}[1]{%
  \num[round-mode = figures, round-precision = 3]{#1}}

\newcommand{\powerOfTwo}[1]{\ensuremath{2^{#1}}}

\newcommand{\surplusGap}{\fpeval{\plannerSurplus - \laissezFaireSurplus}}

\usetikzlibrary{positioning,calc,shapes,shapes.multipart,shapes.misc,arrows,chains,fit,plotmarks}
\def\stackedpayoffs#1#2{%
  \begin{array}{c}\\ #1\hspace{1cm}\\[0.5cm]\hspace{1cm}#2\vspace{0.1cm} \end{array}
}

\usepackage[pagebackref=false]{hyperref}
\hypersetup{colorlinks=true,linkcolor=blue,urlcolor=blue,citecolor=red,
     pdftitle={Entry, Exit, and Technological Progress in Markov-Perfect Duopoly},
    pdfauthor=Jaap H. Abbring and Jeffrey R. Campbell and Nan Yang,
    pdfsubject=Dynamic Oligopoly Theory,
  pdfkeywords={sunk costs, barriers to entry, equilibrium uniqueness, Fake Backward Induction, competition policy},
    pdfdisplaydoctitle=true,
    bookmarksopen=true}

\theoremstyle{plain}
\newtheorem{theorem}{Theorem}
\newtheorem{lemma}{Lemma}
\newtheorem{definition}{Definition}
\newtheorem{alg}{Algorithm}
\theoremstyle{definition}
\newtheorem{assumption}{Assumption}
\newcommand{\cites}[1]{\citeauthor{#1}'s (\citeyear{#1})}
\newcommand{\true}[0]{\mathds{1}}
\newfloat{procedure}{p}{prc}
\floatname{procedure}{Procedure}

\pgfdeclaremask{atari-mask}{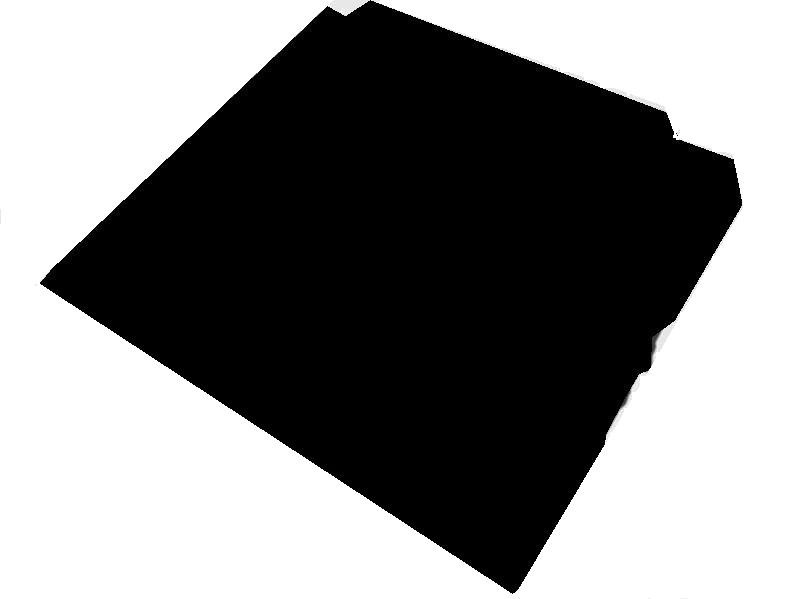}
\pgfdeclareimage[width=15mm,mask=atari-mask,interpolate=true]{atari}{atari400}

\title{\vspace*{-1em}
Entry, Exit, and Technological Progress in Markov-Perfect Duopoly\thanks{We are grateful to Luis Cabral, Bas van Heiningen, and participants at the 2022 Dynamic Structural Econometrics conference at the Australian National University for their helpful comments. The research of Jaap Abbring was financially supported by the Netherlands Organisation for Scientific Research (NWO) through Vici grant 453-11-002. Nan Yang gratefully acknowledges the financial support from the National University of Singapore Start-up Grant R-313-000-100-133. This paper's conclusions are those of the authors. They do not necessarily reflect the opinions of the Federal Reserve Bank of Chicago, the Federal Reserve System, or its Board of Governors.
} 
} 
\author{%
Jaap H. Abbring\thanks{%
Department of Econometrics \& OR, Tilburg University; and CEPR.
E-mail: \href{mailto:jaap@abbring.org}{jaap@abbring.org}}
\and Jeffrey R. Campbell\thanks{%
Department of Economics, University of Notre Dame; Department of Econometrics \& OR, Tilburg University; Federal Reserve Bank of Chicago; and CEPR. E-mail:
\href{mailto:jcampb24@nd.edu}{jcampb24@nd.edu}}
\and Nan Yang\thanks{%
E-mail:
\href{mailto:yangnanpku@gmail.com}{yangnanpku@gmail.com}
\newline \emph{Keywords:} sunk costs, barriers to entry, equilibrium uniqueness, Fake Backward Induction, competition policy
\newline \emph{JEL codes:} C63, C73, L13}}

\begin{document}
\setcounter{page}{0}

 \maketitle

\begin{abstract}
This paper establishes the existence and essential uniqueness of an intuitively refined Markov-perfect equilibrium in a duopoly game with sunk costs, barriers to entry, firm-specific technological progress, and irreversible exit.  We develop an algorithm that quickly computes the equilibrium by solving for fixed points of a finite sequence of auxiliary contraction mappings, which we label \emph{Fake Backward Induction}. We illustrate the game's practical value with a quantitative analysis of competition policy in a stochastically growing market with firms idiosyncratically learning by aging. Computing one equilibrium of this example takes \seconds{\secondsPerEquilibrium} seconds on one \solvingProcessor\ core.
\end{abstract}

\thispagestyle{empty}

\newpage \setcounter{page}{1}


\section{Introduction}

This paper studies the Markov-perfect equilibria of a dynamic duopoly game of entry and exit. The game's potential entrants pay sunk costs to become active, and its active firms' technologies idiosyncratically improve over time. Exogenously determined market conditions follow a Markov process. A firm's static payoff rises in its own technology and declines with the entry and technological development of a rival. Firms can exit irreversibly to avoid losses. We focus on symmetric equilibria in which a firm with superior technology never exits while leaving behind a less advanced rival \citep{jet93:cabral}, which we call \emph{natural} Markov-perfect equilibria. We show that such an equilibrium exists and that all natural equilibria imply the same joint distribution for firms' technologies and market conditions. We develop an algorithm that computes one such equilibrium quickly by finding the fixed points of a finite sequence of low-dimensional contraction mappings.  

We leverage one insight into a symmetric Markov-perfect equilibrium's payoffs: when two firms nontrivially mix between exit and survival, the expected payoff from survival equals zero. This allows calculation of firms' expected continuation values at some nodes of the game tree without knowing how the game will proceed. In turn, this facilitates using a particular induction to determine firms' payoffs with standard dynamic programming methods applied to \emph{simplified} specifications of firms' strategic environments.

Our algorithm builds on applications of backward induction to compute the equilibria of dynamic directional games \citep*[e.g.,][]{ecta94:cabralriordan,ier12:juddetal}. \citet*{iskhakov2016recursive} systematize these procedures into an algorithm that computes all of these games' equilibria. In the games they consider, primitive restrictions on state transitions partially order the state space: state B comes after state A if the game can move from A to B but not from B to A. Iterating backward through this partial order computes equilibrium outcomes and continuation values recursively. Our game lacks such directionality, but nodes at which identical duopolists exit with positive probability act like terminal nodes. For this reason, we say that our algorithm uses \emph{Fake Backward Induction} (henceforth ``FBI'').

\cites{res95:ericsonpakes} dynamic oligopoly framework is more general than ours: it allows firms to control their rate of technological progress with investment and allows three or more firms to be active. \citet*{elsevier07:doraszelskipakes} review the large literature that builds on this framework. The game's state may include any combination of the firms' payoff-relevant characteristics, so the number of possible states grows exponentially with the number of firms. \citet*{ecta01:pakesmcguire} develop a stochastic algorithm to overcome the resulting curse of dimensionality. Nevertheless, we know little about the framework's equilibrium set beyond computational demonstrations of multiplicity \citep*[e.g.,][]{ecta10:besankoetal}.\footnote{Equilibrium multiplicity need not impede the empirical application of similar frameworks. For example, \citet*{ecta07:bajarietal} and \citet*{rje13:dunneetal} assume that firms in markets with identical primitives play the \emph{same} equilibrium. However, equilibrium multiplicity might impede counterfactual policy analysis. \citet{nwu23:reguantpareschi} bound the counterfactual outcomes that any equilibrium could support.}

Our results show that equilibrium multiplicity disappears from duopoly markets when firms choose continuation rather than investment intensity and attention is restricted to natural Markov-perfect equilibria. Our framework might therefore provide a useful duopoly benchmark for industries in which the most consequential investment decisions concern the continuation of a fixed product-development program, such as pharmaceuticals \citep{ubc23:khmelnitskaya}. We guarantee our algorithm's convergence. Therefore, its application can reliably produce starting points for homotopy \citep*{or10:borkovskyetal}, which in turn can trace the equilibrium sets of richer specifications. In this way, we hope to follow \citet*{rand94:pakesmcguire} by providing a tool to make Markov-perfect equilibrium analysis easier.\footnote{The analysis of \emph{oblivious equilibrium} \citep*{ecta08:weintraubetal} provides another approach to reducing the difficulty of Markov-perfect equilibrium analysis. That characterizes industry dynamics involving many competitors, whereas our game characterizes a market with at most two firms.} 

The remainder of the paper proceeds as follows. Section \ref{sec:duopolyGame} presents the duopoly game's primitives and defines ``natural Markov-perfect equilibrium.'' Section \ref{sec:twoTypes} develops intuition by constructing such an equilibrium in the special case with two technology types using Fake Backward Induction. Section \ref{sec:constructionUniqueness}  generalizes this construction to an arbitrary number of technology types, and establishes that the natural Markov-perfect equilibrium it computes is essentially unique. Section \ref{sec:competitionPolicy} applies the algorithm to a quantitative analysis of regulatory schemes that change firms' entry and exit incentives. Section \ref{sec:conclusion} provides concluding remarks on the algorithm's computational speed. The Appendix proves the lemmas of Section \ref{sec:twoTypes}. Our Online Supplement \citep*{duopolysupp:abbringetal} proves Theorems \ref{th:existence} and \ref{th:uniqueness} from Section \ref{sec:constructionUniqueness}.


\section{The Duopoly Game\label{sec:duopolyGame}}

The duopoly game characterizes a market in discrete time, indexed with $t$. At any given time, either zero, one, or two firms serve the market. Each firm active in period $t$ has an idiosyncratic technology type $K_t$ that takes values in $\mathbb{K}\equiv\{1,\ldots,\check{k}\}$.\footnote{Although we will consistently refer to $K_t$ as reflecting the firm's technology type, it could have a demand-side interpretation.}  We use capital Roman letters to denote random variables and reserve small Roman letters for their realizations. Here, $\check{k}<\infty$ is the  highest possible technology type. To distinguish an active firm's technology type from its competitor's, we denote the latter with $X_t$. We set $X_t=0$ when there is no active competitor. Firms' period $t$ payoffs depend on their types and on an \emph{exogenous state} $Y_t$, which may embody both demand and supply factors. We denote its support with $\mathcal{Y}$.

The sequence of events and actions within period $t$ proceeds as follows.

\begin{enumerate}
\item The period $t$ technology types of all firms surviving from period $t-1$ and a realization $y_t$ of the exogenous state $Y_t$ are drawn from independent Markov processes. 
\item Active firms receive profits from serving the market. For a type $k$ firm facing a type $x$ rival, these equal $\pi(k,x,y_t)\geq 0$. 
\item If fewer than two firms are active, entry can take place. If the market has a single incumbent, one potential entrant decides on entry; if it is empty, two potential entrants make their choices sequentially.\footnote{A firm benefits from being first in this sequence; but since the firms are identical after entry, all assignments of firms to positions in this queue are observationally equivalent modulo firms' names.} A firm that stays out of the market earns a payoff of zero. Such a refusal to enter also ends the period's entry stage, so a second potential entrant loses its opportunity whenever its potential rival earlier in the queue remains inactive \citep*[as in][]{vsmpidTheory18:abbringetal}.  An entrant into an empty market pays a sunk cost $\varphi(0,y_t)\geq 0$ and immediately gets the technology type $1$.  An entrant into a market with a type $x$ incumbent or earlier entrant pays a sunk cost $\varphi(x,y_t)$ and joins the set of active firms with technology type $1$.\footnote{When two firms enter an initially empty market, the first entrant pays $\varphi(0,y_t)$ and the second pays $\varphi(1,y_t)$. } The potential entrants for period $t$ do \emph{not} have the option to delay their entry opportunities. Instead, they are replaced in the next period by new firms. 
\item All active firms -- incumbents and new entrants -- simultaneously choose between continuation and exit. Exit is irreversible, and its payoff is zero. Survival requires paying the fixed cost of continuation $\kappa(y_t)\geq 0$. The surviving firms move to period $t+1$.
\end{enumerate}
Firms' entry and exit decisions maximize their expected profit streams discounted with the factor $\beta\in[0,1)$. Each firm can randomize with a private device that generates any number of draws from the uniform distribution on $[0,1]$. The draws are independent of each other and of all payoff-relevant variables. Only the firm itself observes its draws, so firms cannot use them to coordinate their actions.

We impose some restrictions on how a firm's payoffs depend on its own and its competitor's types.
\begin{assumption}[Monotone and Bounded Profits] 
\label{ass:monotoneBoundedProfits} 
\;
\begin{enumerate}
\item \label{ass:monotoneBoundedProfits:ownType} For all $k\in\{1,\ldots,\check{k}-1\}$, all $x\in\{0\}\cup\mathbb{K}$, and all $y\in\mathcal{Y}$, $\pi(k,x,y)\leq\pi(k+1,x,y)$;
\item \label{ass:monotoneBoundedProfits:rivalType} for all $(k,x)\in\mathbb{K}\times\mathbb{K}$ and all $y\in\mathcal{Y}$, $\pi(k,x,y)\leq\pi\left(k,x-1,y\right)$; and
\item \label{ass:monotoneBoundedProfits:bounded} there is a $\check{\pi}<\infty$ such that for all $(k,x)\in\mathbb{K}\times\left(\{0\}\cup\mathbb{K}\right)$ and $y\in\mathcal{Y}$, $\pi(k,x,y)\leq \check{\pi}$.
\end{enumerate}
\end{assumption}

\begin{assumption}[Barriers to Entry]
\label{ass:entryBarriers} For all $y\in\mathcal{Y}$ and $x\in\mathbb{K}$, $0\leq\varphi(x-1,y)\leq \varphi(x,y)$.
\end{assumption}

\noindent Assumptions \ref{ass:monotoneBoundedProfits}.\ref{ass:monotoneBoundedProfits:ownType} and \ref{ass:monotoneBoundedProfits}.\ref{ass:monotoneBoundedProfits:rivalType} require that a firm's profits increase with its own technology and decrease with the addition of a rival or the improvement of the rival's technology. Because these orderings must hold in every state $y$, Assumption \ref{ass:monotoneBoundedProfits} excludes the possibility that firms' profitability ranking varies with demand.\footnote{\cite{rand85:ghemawatnalebuff} exemplify that possibility. In their model, a small monopolist remains profitable at demand levels too low to support a larger one, so the largest firm exits first from a declining market.} Assumption \ref{ass:entryBarriers} ensures that a firm's entry cost weakly increases with a rival's presence and technology type. There are \emph{economic barriers to entry} \citep*{aer2004McAfeeMialonWilliams} if firms pay \emph{strictly} more to enter a market with a type $x\geq 1$ incumbent than to enter an empty market, $\varphi(x,y)>\varphi(0,y)$. Assumption \ref{ass:monotoneBoundedProfits}.\ref{ass:monotoneBoundedProfits:bounded} is a technical condition that ensures that expected discounted profits are bounded.

We also impose some structure on the stochastic processes for the firms' technology types and the exogenous state. Active firms' technology types follow independent Markov chains over $\mathbb{K}$ with transition probability matrix $\Pi$. The $(i,j)$-th element of $\Pi$, $\Pi_{ij}$, is the probability that a type $i$ firm progresses to type $j$. We make two assumptions on $\Pi$ that align Assumption \ref{ass:monotoneBoundedProfits}.\ref{ass:monotoneBoundedProfits:ownType}'s static notion of having better technology with its dynamic consequences. 
\begin{assumption}[Monotone Technology Dynamics]
\label{ass:monotoneTechnology}
For all $k\in\{1,\ldots,\check{k}-1\}$ and all $k^\prime\in\mathbb{K}$,
$\sum_{j=k'}^{\check{k}}\Pi_{kj}\leq\sum_{j=k'}^{\check{k}}\Pi_{(k+1)j}$.
\end{assumption}
\begin{assumption}[No Technology Regress]
$\Pi$ is upper triangular.
\label{ass:noRegress}
\end{assumption}

\noindent Assumption \ref{ass:monotoneTechnology} requires that the next period's type for a firm with a higher current type stochastically dominates the next period's type for a firm with a lower current type: $\Pr\left[K_{t+1}\geq k'\;\vline\;K_{t}=k\right]\leq \Pr\left[K_{t+1}\geq k'\;\vline\;K_{t}=k+1\right]$. Assumption \ref{ass:noRegress} ensures that firms' technology types never regress. 

Our final three assumptions impose technical restrictions on the Markov process for the exogenous state. 
\begin{assumption}[Conditional Independence]
\label{ass:conditionalIndependence}
The exogenous state $Y_t=(C_t,Z_t)$, where the \emph{persistent state} $C_t$ follows a $\mathcal{C}$-valued first-order Markov process and the \emph{shock} $Z_t$ is identically and independently distributed over time on $\mathcal{Z}$. The process $\{Z_t\}$ is independent of $\{C_t\}$, and $\mathcal{Y}=\mathcal{C}\times\mathcal{Z}$. 
\end{assumption}
\begin{assumption}[Continuously Distributed Shocks]
\label{ass:continuousShocks}
For all $c\in\mathcal{C}$, $x\in\left(\{0\}\cup\mathbb{K}\right)$, and  $u\in\mathbb{R}$; $\Pr\left[\kappa(c,Z_t)=u\right]=\Pr\left[\kappa(c,Z_t)+\varphi(x,(c,Z_t))=u\right]=0$.
\end{assumption}
\begin{assumption}[Finite Persistent State]
\label{ass:finitePersistentState}
The support $\mathcal{C}$ of $C_t$ is a finite set.
\end{assumption}
\noindent Assumption \ref{ass:conditionalIndependence} implies that firms contemplating entry or exit do not use $Z_t$ to predict future payoffs. This is a version of \cites{ecta87:rust} conditional independence assumption, which underlies much of the econometrics of dynamic discrete choices and games.\footnote{\cite{rand99:gowrisankaran} introduces privately observed shocks to firms' scrap values into the \cite{res95:ericsonpakes} framework, and \cite{rand10:doraszelskisatterthwaite} use such shocks to firms' scrap values and set-up costs to implement Harsanyi purification. Those privately observed shocks could also serve as econometric errors. In contrast, we focus on mixed strategies in a game of complete information about payoff-relevant primitives and actions. } Assumptions \ref{ass:conditionalIndependence} and \ref{ass:continuousShocks} ensure that potential entrants and incumbent monopolists contemplating continuation are indifferent between their options with probability zero.  Assumption \ref{ass:finitePersistentState} allows us to center our equilibrium calculations on post-survival value functions with finite domains, so that they can be directly implemented on a digital computer.

We next consider firms' choices in Markov-perfect equilibria. In these, firms' strategies depend only on payoff-relevant variables. For a firm contemplating entry,  the payoff-relevant variables are the technology type of any incumbent rival and the current value of $Y_t$. If a firm is instead making its survival decision, the payoff-relevant variables are these and the firm's own technology type.

A \emph{Markov strategy} for a firm is a pair of functions $a_E:\left(\{0\}\cup\mathbb{K}\right)\times\mathcal{Y}\rightarrow \{0,1\}$ and $a_S:\mathbb{K}\times\left(\{0\}\cup\mathbb{K}\right)\times\mathcal{Y}\rightarrow[0,1].$ The \emph{entry rule} $a_E$ assigns an indicator for becoming active given an entry opportunity to each possible value $(x,y)$ of a potential entrant's payoff-relevant variables, where $x$ refers to either an incumbent's or an earlier entrant's type. We take the entry rule to be pure because potential entrants' decisions proceed sequentially. Similarly, the \emph{survival rule} $a_S$ gives a probability of being active in the next period given that the firm is currently active for each possible value $(k,x,y)$ of its payoff-relevant variables. Firms may use their private devices to implement survival probabilities in $(0,1)$.

Nothing payoff relevant distinguishes the game's firms from each other ex ante: they select their actions from identical action sets, and the same functions $\pi$, $\varphi$, and $\kappa$ determine their payoffs.  If such identical firms follow distinct strategies, then their choices depend on the only characteristic distinguishing them, their permanent but payoff-irrelevant names. \citet{jet01:maskintirole} require Markov strategies to exclude payoff-irrelevant variables from influencing play, and so allowing firms to follow different strategies violates that requirement's spirit if not its letter. For that reason, we restrict attention to \emph{symmetric} equilibria. In these, all firms follow the same strategy.

Firms' expected discounted profits in two of each period's nodes are of particular interest, the \emph{post-entry} value and the \emph{post-survival} value. The post-entry value $v_E(k,x,y)$ equals the expected discounted profits just after all entry decisions have been sequentially realized. When evaluated at $k=1$, it gives the payoffs to a potential producer from entering.  The post-survival value $v_S(k,x,c)$ equals the expected discounted profits just after all survival decisions have been realized. It depends only on the exogenous state's persistent component, because the conditionally independent shock's value neither directly affects future payoffs nor helps predict them. Here and throughout, $c$ denotes the persistent component of the exogenous state's value $y$. In $v_E$ and $v_S$, $k$ and $x$ are the firm's and its rival's types at that node, with $x=0$ when the firm has no rival. 

When all firms follow the Markov strategy $(a_E,a_S)$, the post-entry and post-survival values satisfy 
\begin{align}
\label{eq:vE}
&v_E(k,x,y)=\\
&\nonumber\left\{\begin{array}{ll} 
a_S(k,0,y)\left(v_S(k,0,c)-\kappa(y)\right)&\text{if }x=0,\\
a_S(k,x,y)\Bigg(a_S(x,k,y)v_S(k,x,c)+\left(1-a_S(x,k,y)\right)v_S(k,0,c)-\kappa(y)\Bigg)&\text{if }x>0;
\end{array}\right.
\end{align}
and   
\begin{align}
\label{eq:vS}
&v_S(k,x,c)=\\
&\nonumber\left\{\begin{array}{ll}
 \beta\mathbb{E}\Bigg[\pi(K^\prime,0,Y^\prime) + a_E(K^\prime,Y^\prime)v_E(K^\prime,1,Y^\prime)&\\ 
 \hspace*{9em}+(1-a_E(K^\prime,Y^\prime))v_E(K^\prime,0,Y^\prime)\;\vline\;C=c,K=k\Bigg]
&\text{if }x=0,\\
 \beta\mathbb{E}\Bigg[\pi(K^\prime,X^\prime,Y^\prime) + v_E(K^\prime,X^\prime,Y^\prime)\;\vline\;C=c,K=k,X=x\Bigg]&\text{if }x>0.
\end{array}\right.
\end{align}

\noindent Here, we denote next period's value of a generic random variable $M$ with $M^\prime$. 

If a strategy $(a_E,a_S)$ forms a Markov-perfect equilibrium, then no firm can profit by deviating from it once. Therefore
\begin{align}
\label{eq:aE}
&a_E(x,y) \in\\
&\nonumber\left\{\begin{array}{ll} 
\arg\underset{a\in\{0,1\}}{\max}\;a\Bigg(a_E(1,y)v_E(1,1,y)+\left(1-a_E(1,y)\right)v_E(1,0,y)-\varphi(0,y)\Bigg)&\text{if }x=0,\\
\arg\underset{a\in\{0,1\}}{\max}\;a\left(v_E(1,x,y)-\varphi(x,y)\right)&\text{if }x>0;
\end{array}\right.
\end{align}
and
\begin{align}
\label{eq:aS}
&a_S(k,x,y) \in\\
&\nonumber\left\{\begin{array}{ll} 
\arg\underset{a\in[0,1]}{\max}\;a\left(v_S(k,0,c)-\kappa(y)\right)&\text{if }x=0,\\
\arg\underset{a\in[0,1]}{\max}\;a\Bigg(a_S(x,k,y)v_S(k,x,c)+\left(1-a_S(x,k,y)\right)v_S(k,0,c)-\kappa(y)\Bigg)&\text{if }x>0.
\end{array}\right.
\end{align}

\noindent \citet*[][Theorem S1 of the Supplemental Material]{vsmpidTheory18:abbringetal} establish that the converse is also true.\footnote{\label{fn:oneShotDeviation}Standard statements of this ``one-shot deviation principle,'' such as \cites{mit91:fudenbergtirole} Theorem 4.2, require payoffs to be ``continuous at infinity.'' Our game's payoffs might fail this condition because $\kappa(y)$ may be unbounded from above. The Supplemental Material is available with the article at \href{https://doi.org/10.3982/ECTA14060}{doi.org/10.3982/ECTA14060}.} Consequently, a strategy $(a_E,a_S)$ forms a Markov-perfect equilibrium if and only if \eqref{eq:vE}--\eqref{eq:aS} hold.

Whenever two firms with different technology types would both earn losses from continuing together, yet each would profit from surviving alone, a Markov-perfect strategy could dictate that the firm with the worse technology exits for sure, that its rival with the better technology exits for sure, or that both firms play mixed strategies. In the last two possibilities, the firm with the higher monopoly payoff exits more often than its rival. Two considerations lead us to follow \cite{jet93:cabral} by selecting equilibria in which a high-type firm never exits when a low-type competitor survives. First, survival increases with profitability in the data \citep*{aer08:fosteretal}. Second, when a firm with higher profits also adds more to social surplus, a social planner who makes every entry and continuation decision never has a better firm exit while a worse one continues. An equilibrium that did so would qualitatively depart from the planner's welfare benchmark. We refer to the selected equilibria as natural.\footnote{\cite{ecta10:abbringcampbell} use a similar restriction, that an older firm never exits while leaving behind a younger rival, to structure their equilibrium analysis and computation.}
\begin{definition}
\label{def:naturalEquilibrium} A \emph{natural} Markov-perfect equilibrium is a symmetric Markov-perfect equilibrium $(a_E,a_S)$ such
that for all $(h,l)\in\mathbb{K}^2$ with $l<h$ and all $y\in \mathcal{Y}$; $a_S(l,h,y)>0$ implies that $a_S(h,l,y)=1.$
\end{definition}

Henceforth, ``equilibrium'' refers to a natural Markov-perfect equilibrium.


\section{Fake Backward Induction with Two Technology Types\label{sec:twoTypes}}

To lay the groundwork for the game's analysis, we first construct an equilibrium for the special case with $\check{k}=2$ and strictly positive entry costs $\varphi$.\footnote{Section \ref{sec:constructionUniqueness}'s algorithmic-functionality and equilibrium-existence result for the game with general $\check{k}$ (Theorem \ref{th:existence}) allows zero entry costs (as in Assumption \ref{ass:entryBarriers}). We do, however, exclude zero entry costs to establish essential uniqueness (Theorem \ref{th:uniqueness}).} This displays the moving parts of FBI without undue complications. To emphasize that we are constructing a \emph{candidate} equilibrium, we denote the calculated payoffs and strategies with $w_S(k,x,c)$, $w_E(k,x,y)$, $\alpha_S(k,x,y)$, and $\alpha_E(k,y)$.  As above, we use $v_S(k,x,c)$, $v_E(k,x,y)$, $a_S(k,x,y)$, and $a_E(k,y)$ to denote equilibrium payoffs and strategies.

This procedure consists of six steps. The first five take us through a partition of the state space. In each of these, we compute the equilibrium payoffs in the states considered by finding the fixed point of a contraction mapping. These mappings impose necessary conditions on equilibrium payoffs, so the computed values equal the only possible equilibrium payoffs. The results from the completed steps are used as inputs to the following steps.

Figure \ref{fig:twoTypesFakeBackwardInduction} illustrates this sequence of computations. Each of its six panels contains five rectangles divided into top and bottom halves. A rectangle represents a market structure, with the numbers in its top and bottom halves giving the number of active type $1$ and type $2$ firms, respectively. Labels mark the six steps, and the arrows between them indicate their sequence. If the value function for a firm with a given type in a particular market structure is being calculated in a step, then its half-rectangle is colored red. In subsequent steps it is colored blue to indicate that the corresponding value function is in hand.


\begin{figure}[tp]
\caption{\label{fig:twoTypesFakeBackwardInduction}Fake Backward Induction with Two Technology Types}
\begin{center}

\begin{tikzpicture}[node distance=3.5mm,
    unknown/.style={rectangle split, draw, rectangle split parts=2,rectangle split
part align={center,center},minimum width=.1cm},
    computing bottom/.style = {rectangle split, draw, rectangle split parts=2,rectangle split
part align={center,center},minimum width=.1cm ,rectangle split part fill = {white,red!50}},
    known bottom/.style ={rectangle split, draw, rectangle split parts=2,rectangle split
part align={center,center},minimum width=.1cm ,rectangle split part fill = {white,blue!50}},
    computing top/.style={rectangle split, draw, rectangle split parts=2,rectangle split
part align={center,center},minimum width=.1cm ,rectangle split part fill = {red!50,white}},
    known top/.style ={rectangle split, draw, rectangle split parts=2,rectangle split
part align={center,center},minimum width=.1cm ,rectangle split part fill = {blue!50,white}},
    computing bottom known top/.style={rectangle split, draw, rectangle split parts=2,rectangle split
part align={center,center},minimum width=.1cm ,rectangle split part fill = {blue!50,red!50}},
    known/.style={rectangle split, draw, rectangle split parts=2,rectangle split
part align={center,center},minimum width=.1cm ,rectangle split part fill = {blue!50,blue!50}},
    step label/.style={inner sep=0.5pt, minimum size=13mm, font=\footnotesize\scshape},
]

\node at (-3.5,5.8)  (start here) {\pgfuseimage{atari}};


\node[step label] at (-3.5,4)  (computer1) {Step 1};

\node[unknown,right=of computer1](step 1 market structure 0 2){0\nodepart{second}2};

\node[unknown,right=of step 1 market structure 0 2] (step 1 market structure 1 1) {1\nodepart{second}1};
\node[unknown,right=of step 1 market structure 1 1] (step 1 market structure 0 1) {0\nodepart{second}1};
\node[unknown,right=of step 1 market structure 0 1] (step 1 market structure 2 0) {2\nodepart{second}0};
\node[unknown,right=of step 1 market structure 2 0] (step 1 market structure 1 0) {1\nodepart{second}0};

\node[computing bottom] at (step 1 market structure 0 2)
{0\nodepart{second}2};


\node[step label,below=7mm of computer1] (computer2){Step 2};

\node[unknown,right=of computer2](step 2 market structure 0 2){0\nodepart{second}2};

\node[unknown,right=of step 2 market structure 0 2] (step 2 market structure 1 1) {1\nodepart{second}1};
\node[unknown,right=of step 2 market structure 1 1] (step 2 market structure 0 1) {0\nodepart{second}1};
\node[unknown,right=of step 2 market structure 0 1] (step 2 market structure 2 0) {2\nodepart{second}0};
\node[unknown,right=of step 2 market structure 2 0] (step 2 market structure 1 0) {1\nodepart{second}0};

\node[known bottom] at (step 2 market structure 0 2) {0\nodepart{second}2};

\node[computing top] at (step 2 market structure 1 1) {1\nodepart{second}1};

\node[step label,below=7mm of computer2] (computer3) {Step 3};

\node[unknown,right=of computer3](step 3 market structure 0 2){0\nodepart{second}2};

\node[unknown,right=of step 3 market structure 0 2] (step 3 market structure 1 1) {1\nodepart{second}1};
\node[unknown,right=of step 3 market structure 1 1] (step 3 market structure 0 1) {0\nodepart{second}1};
\node[unknown,right=of step 3 market structure 0 1] (step 3 market structure 2 0) {2\nodepart{second}0};
\node[unknown,right=of step 3 market structure 2 0] (step 3 market structure 1 0) {1\nodepart{second}0};

\node[known bottom] at (step 3 market structure 0 2) {0\nodepart{second}2};

\node[computing bottom known top] at (step 3 market structure 1 1) {1\nodepart{second}1};

\node[computing bottom] at (step 3 market structure 0 1) {0\nodepart{second}1};

\node[step label] at (4.5,4) (computer4){Step 4};

\node[unknown,right=of computer4](step 4 market structure 0 2){0\nodepart{second}2};

\node[unknown,right=of step 4 market structure 0 2] (step 4 market structure 1 1) {1\nodepart{second}1};
\node[unknown,right=of step 4 market structure 1 1] (step 4 market structure 0 1) {0\nodepart{second}1};
\node[unknown,right=of step 4 market structure 0 1] (step 4 market structure 2 0) {2\nodepart{second}0};
\node[unknown,right=of step 4 market structure 2 0] (step 4 market structure 1 0) {1\nodepart{second}0};

\node[known bottom] at (step 4 market structure 0 2) {0\nodepart{second}2};
\node[known] at (step 4 market
structure 1 1) {1\nodepart{second}1};
\node[known bottom] at (step 4 market structure 0 1) {0\nodepart{second}1};

\node[computing top] at (step 4 market structure 2 0) {2\nodepart{second}0};


\node[step label,below=7mm of computer4] (computer5){Step 5};

\node[unknown,right=of computer5](step 5 market structure 0 2){0\nodepart{second}2};

\node[unknown,right=of step 5 market structure 0 2] (step 5 market structure 1 1) {1\nodepart{second}1};
\node[unknown,right=of step 5 market structure 1 1] (step 5 market structure 0 1) {0\nodepart{second}1};
\node[unknown,right=of step 5 market structure 0 1] (step 5 market structure 2 0) {2\nodepart{second}0};
\node[unknown,right=of step 5 market structure 2 0] (step 5 market structure 1 0) {1\nodepart{second}0};

\node[known bottom] at (step 5 market structure 0 2) {0\nodepart{second}2};

\node[known] at (step 5 market structure 1 1) {1\nodepart{second}1};

\node[known bottom] at (step 5 market structure 0 1) {0\nodepart{second}1};

\node[known top] at (step 5 market structure 2 0) {2\nodepart{second}0};

\node[computing top] at (step 5 market structure 1 0) {1\nodepart{second}0};

\node[step label,below=7mm of computer5] (computer6){Step 6};

\coordinate (final state location) at ($(step 4 market structure 0 2)!(computer3)!(step 5 market structure 0 2)$);

\node[unknown] at (final state location) (final market structure 0 2) {0\nodepart{second}2};
\node[unknown,right=of final market structure 0 2] (final market structure 1 1) {1\nodepart{second}1};
\node[unknown,right=of final market structure 1 1] (final market structure 0 1) {0\nodepart{second}1};
\node[unknown,right=of final market structure 0 1] (final market structure 2 0) {2\nodepart{second}0};
\node[unknown,right=of final market structure 2 0] (final market structure 1 0) {1\nodepart{second}0};

\node[known top] at (final market structure 2 0) {2\nodepart{second}0};
\node[known] at (final market structure 1 1) {1\nodepart{second}1};
\node[known top] at (final market structure 1 0) {1\nodepart{second}0};
\node[known bottom] at (final market structure 0 2) {0\nodepart{second}2};
\node[known bottom] at (final market structure 0 1) {0\nodepart{second}1};

\draw[-stealth,very thick,solid] (start here.south) -- (computer1.north);
\draw[-stealth,very thick,solid] (computer1.south) -- (computer2.north);
\draw[-stealth,very thick,solid] (computer2.south) -- (computer3.north);

\coordinate (a34) at ($(computer3.south)+(0,-5mm)$);
\coordinate (b34) at ($(a34)+(5cm,0)$);
\coordinate (mid34) at ($(computer6.west)+(-5mm,0)$);
\coordinate (c34) at ($(a34)!(mid34)!(b34)$);
\coordinate (d34) at ($(c34)!(computer4.west)!(mid34)$);

\draw[-stealth,very thick,solid,rounded corners] (computer3.south) -- (a34) -- (c34) -- (d34) -- (computer4.west);
\draw[-stealth,very thick,solid] (computer4.south) -- (computer5.north);
\draw[-stealth,very thick,solid] (computer5.south) -- (computer6.north);

\node[rectangle,black!90] at ($(step 1 market structure 0
2.south)+(4mm,-0.5cm)$) (calculating) {\tiny  Values being calculated}; \draw[-stealth,solid,black!90] (calculating)--
(step 1 market structure 0 2);

\node[rectangle,black!90] at ($(step 2 market structure 0 2.south)+(0,-0.5cm)$) (in hand) {\tiny Values
in hand}; \draw[-stealth,solid,black!90] (in hand) -- (step 2 market structure 0 2);

\node[rectangle,black!90] at ($(step 1 market structure 1
0.north)+(0,1cm)$) (low types) {\tiny Number of type $1$
firms}; \draw[-stealth,solid,black!90] (low types) -- (step 1
market structure 1 0);

\node[rectangle,black!90] at ($(step 1 market structure 1
0.south)+(0,-0.5)$) (high types) {\tiny Number of type $2$ firms}; \draw[-stealth,solid,black!90] (high types) --
(step 1 market structure 1 0);

\node[rectangle,black!90,draw,rounded corners,fit=(step 4 market structure 1 1)] (market structure container) {};
\node[rectangle,black!90] at ($(step 4 market structure 1 1.north)+(0,1cm)$) (market structure legend) {\tiny Market structure with one firm of each type};
\draw[-stealth,solid,black!90] (market structure legend.south) -- (market structure container);

\node[rectangle,draw,rounded corners,text centered,text width=2.8cm,font=\tiny] at ($(final market structure 0 1.south)+(0,-1.5cm)$) (step 6 instructions) {Calculate remaining entry and survival rules.};

\node at (computer6 |- step 6 instructions) (stopping point) {\pgfuseimage{atari}};
\draw[-stealth,very thick,solid] (computer6.south) -- (stopping point.north);

\end{tikzpicture}

\end{center}
{\footnotesize Note: There are five possible market structures with one or two firms. Each divided rectangle represents one of
them, with the top and bottom numbers giving the number of type 1 and type 2 firms.  Each collection of five rectangles displays the value functions being calculated (in red) and the value functions already in hand (in blue) at one stage of the algorithm. The labels mark Steps 1 through 6 of the text's construction, and the arrows indicate their sequence. The computer visually indicates that this presents an algorithm. It is an Atari 400, first sold in November 1979. It had 8 KB of RAM and could run BASIC programs loaded from a cassette tape drive.}
\end{figure}

\pgfkeyssetvalue{/homogeneous/duopoly continuation payoff}{v_S(2,2,c)-\kappa(y)}
\pgfkeyssetvalue{/homogeneous/monopoly continuation payoff}{v_S(2,0,c)-\kappa(y)}

\begin{figure}[tp]
\caption{Reduced-form Representation of a Duopoly Continuation Game\label{fig:duopolyReducedForm}}
\begin{center}
\begin{tabular}{r|c|c|}
\multicolumn{1}{c}{} & \multicolumn{1}{c}{Survive}
  & \multicolumn{1}{c}{Exit} \\ \cline{2-3}
Survive & $\stackedpayoffs{\pgfkeysvalueof{/homogeneous/duopoly continuation payoff}}{\pgfkeysvalueof{/homogeneous/duopoly continuation payoff}}$ & $\stackedpayoffs{\pgfkeysvalueof{/homogeneous/monopoly continuation payoff}}{0}$ \\ \cline{2-3}
Exit & $\stackedpayoffs{0}{\pgfkeysvalueof{/homogeneous/monopoly continuation payoff}}$ & $\stackedpayoffs{0}{0}$ \\ \cline{2-3}
\end{tabular}
\end{center}
\footnotesize\noindent Note: In each cell, the upper-left expression gives the row player's payoff. Please see the
text for details.
\end{figure}

\paragraph{Step 1: Calculation of $w_S(2,2,c)$ and $w_E(2,2,y)$.}

As depicted by the upper-left panel in Figure \ref{fig:twoTypesFakeBackwardInduction}, we start the construction by considering a market populated by two type $2$ firms. Suppose that we are at a post-entry node of the game tree with the exogenous state equal to $y=(c,z)$, and consider equilibrium survival decisions of the two incumbents. Their common survival probability $a_S(2,2,y)$ must form a symmetric Nash equilibrium of the static simultaneous-move game in Figure \ref{fig:duopolyReducedForm}. The upper-left expression in each cell is the row player's payoff. A firm that survives while its rival exits earns the monopoly post-survival value net of the cost of continuation, $v_S(2,0,c)-\kappa(y)$. Following joint survival, both firms receive the duopoly post-survival payoff net of the same cost $v_S(2,2,c)-\kappa(y)$. The following lemma places structure on this one-shot game.

\begin{lemma}
\label{lem:twoTypesHighMonopolyBeatsDuopoly}
In any equilibrium, $v_S(2,2,c)\leq v_S(2,0,c)$ for all $c\in\mathcal{C}$.
\end{lemma}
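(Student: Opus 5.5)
We prove Lemma \ref{lem:twoTypesHighMonopolyBeatsDuopoly} by a coupling, or revealed-preference, comparison. A type $2$ firm can never regress (Assumption \ref{ass:noRegress}), so with $\check{k}=2$ it stays type $2$ forever. Fix an equilibrium $(a_E,a_S)$, a persistent state $c$, and a firm $F$ that is type $2$ after survival.

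First, write $v_S(2,x,c)$ as the value of $F$ following its equilibrium survival rule from the next period on. Its continuation payoff depends on $x$ only through the path of rival states: whether a rival is present, the rival's type, and the rival's exit decisions. Second, build a feasible strategy for $F$ in the monopoly situation that mimics its equilibrium behaviour in the duopoly situation. Specifically, starting from state $(2,0,c)$, let $F$ simulate, with its private randomization device, a fictitious rival. The fictitious rival starts from type $2$, follows the equilibrium strategy, and has technology drawn from $\Pi$. Whenever the fictitious duopoly path calls for $F$ to exit, $F$ exits; otherwise $F$ continues. Third, compare period by period along coupled paths (same $Y$ realizations, same device draws). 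In the real monopoly path, $F$'s actual rival is either absent or some real entrant. The goal is to show that, path by path, $F$'s realized profit is at least its profit in the simulated duopoly path. That holds if in every period $F$'s actual rival state is weakly "better" for $F$ (absent, or of lower type) than the fictitious one, because $\pi$ is decreasing in the rival's type and presence (Assumption \ref{ass:monotoneBoundedProfits}.\ref{ass:monotoneBoundedProfits:rivalType}), and $F$ pays the same continuation costs $\kappa$. Then the value of this deviation strategy from $(2,0,c)$ weakly exceeds $v_S(2,2,c)$, and optimality of equilibrium play (the one-shot deviation principle of \citet{vsmpidTheory18:abbringetal}) gives $v_S(2,0,c)\geq$ the deviation value $\geq v_S(2,2,c)$.

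The main obstacle is the coupling in the third step: real entrants in the monopoly path respond to $F$'s presence. I would avoid constructing a dominating rival path and instead use a simpler bound: treat the rival's presence as only reducing profits. This suggests an alternative cleaner route, which I would try first. Define the auxiliary value $\bar v$ of a type $2$ firm that, in every period, faces the worst possible rival (type $2$, present always) but may exit optimally. Then show by a contraction/monotonicity argument on the Bellman operator that $v_S(2,2,c)$ is bounded above by the value of $F$ holding the rival's future exit as favourable as possible. In the other direction, $v_S(2,0,c)$ is bounded below by the value of a firm that faces a type $2$ rival forever and can only exit. The key fact is that from state $(2,2,c)$ the rival is already the worst type and can only disappear. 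From $(2,0,c)$, any rival that appears enters at type $1$ and can at most reach type $2$, and entry occurs only if it is profitable. Hence $F$'s profit flows from $(2,0,c)$ dominate, period by period, those along the coupled path from $(2,2,c)$, under a coupling in which the rival from $(2,0,c)$ is always weakly weaker or absent. Verifying that such a coupling survives the rival's own mixed exit decisions (naturality, Definition \ref{def:naturalEquilibrium}, ensures a type $1$ rival facing type $2$ $F$ exits whenever any exit occurs) is the step requiring care. I expect to handle it by induction on a finite horizon truncation of the Bellman recursion, then letting the horizon go to infinity using bounded profits (Assumption \ref{ass:monotoneBoundedProfits}.\ref{ass:monotoneBoundedProfits:bounded}) and $\beta<1$.
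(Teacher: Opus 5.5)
There is a genuine gap, and it is in the step you flag as needing care. Your pathwise claim fails in general. You want a coupling under which the rival on the path from $(2,0,c)$ is always absent or weakly weaker than the rival on the path from $(2,2,c)$. But when $a_S(2,2,Y_t)<1$, the type $2$ rival on the duopoly path may exit while $F$ survives, leaving $F$ a monopolist. On the monopoly path an entrant may be active at that same date, so $F$'s flow profit there is strictly lower.

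Naturality does not help, because it says nothing about two type $2$ firms, which is exactly the situation on the duopoly path. A finite-horizon truncation cannot repair a failure that occurs path by path. The same problem defeats your first construction, which mimics a simulated duopoly path including what happens after the fictitious rival exits. In your second construction, the proposed upper bound on $v_S(2,2,c)$ (the value with the rival's exit ``as favourable as possible'') weakly exceeds $\bar v(c)$. It therefore cannot be chained with $\bar v(c)\le v_S(2,0,c)$.

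The missing idea is symmetry at the mixing nodes: the rival's exit never benefits $F$.
\begin{itemize}
\item Let $\tau$ be the first $t\ge t_0+1$ with $a_S(2,2,Y_t)<1$. Because $F$ follows the same rule, exit is a best response for $F$ at $\tau$, so $F$'s value from period $\tau$'s survival stage onward is zero.
\item Before $\tau$ the duopoly path is degenerate: both firms survive, both remain type $2$ by Assumption \ref{ass:noRegress}, and no entry can occur. Hence $v_S(2,2,c)$ equals the expected discounted sum of $\pi(2,2,Y_t)$ for $t_0<t\le\tau$, less $\kappa(Y_t)$ for $t_0<t<\tau$.
\item Since $\tau$ depends only on the exogenous process, the monopolist can play ``survive before $\tau$, exit at $\tau$.'' Any rival it meets has type at most $\check{k}=2$, so Assumption \ref{ass:monotoneBoundedProfits}.\ref{ass:monotoneBoundedProfits:rivalType} gives it flow profit at least $\pi(2,2,Y_t)$.
\end{itemize}
No coupling of rival paths is needed, and this is the paper's argument. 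The entrants' responses on the monopoly path, which you identified as the main obstacle, are irrelevant.

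Your $\bar v$ route can also be rescued with this observation. From \eqref{eq:vE} and \eqref{eq:aS} it gives $v_E(2,2,y)\le\max\{0,v_S(2,2,c)-\kappa(y)\}$ for every $y$. Let $\Gamma$ be the monotone contraction whose fixed point is $\bar v$. Then $v_S(2,2,\cdot)\le\Gamma(v_S(2,2,\cdot))$, and iterating gives $v_S(2,2,c)\le\bar v(c)$. Combined with your correct lower bound $\bar v(c)\le v_S(2,0,c)$, this closes the argument.
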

\noindent The Appendix contains the proofs of this and all other lemmas.

Lemma \ref{lem:twoTypesHighMonopolyBeatsDuopoly} allows us to prove that Figure \ref{fig:duopolyReducedForm}'s game has a unique symmetric Nash equilibrium payoff.\footnote{\label{fn:duopolyEssential}This paper's restriction to at most two active firms is essential for proving Lemma \ref{lem:twoTypesHighMonopolyBeatsDuopoly}. Within the analogous game that allows three active firms, it is not difficult to construct an equilibrium in which adding a type $2$ firm increases a type $2$ incumbent's expected discounted profits by deterring the entry of two type $1$ firms. We are considering approaches to overcoming this difficulty in our current research.} If $v_S(2,2,c)>\kappa(y)$, then survival is a strictly dominant strategy for both firms, and the unique symmetric Nash equilibrium is ``Survive, Survive.''  If instead $v_S(2,0,c)<\kappa(y)$, then exit is a strictly dominant strategy for both firms, and the unique symmetric Nash equilibrium is ``Exit, Exit.'' In the case with $v_S(2,2,c)\leq\kappa(y)\leq v_S(2,0,c)$, any symmetric Nash equilibrium's mixed strategy makes both firms indifferent between exit and continuation.\footnote{If $v_S(2,0,c)=v_S(2,2,c)=\kappa(y)$, then any survival probability forms a Nash equilibrium in mixed strategies. Assumption \ref{ass:continuousShocks} ensures that this happens with probability zero.} Bringing these three cases together establishes that $v_E(2,2,y)= \max\{0,v_S(2,2,c)-\kappa(y)\}$. Combining this result with \eqref{eq:vS} yields
\begin{equation}
\label{eq:twoTypesBellmanHighHigh}
v_S(2,2,c) =\beta\mathbb{E}\left[\pi(2,2,Y^\prime) + \max\{0,v_S(2,2,C^\prime)-\kappa(Y^\prime)\}\;\vline\;C=c\right].
\end{equation}

\noindent The right-hand side of \eqref{eq:twoTypesBellmanHighHigh} defines an operator on functions of $c$. This operator satisfies Blackwell's sufficient conditions for a contraction, and its unique fixed point determines $v_S(2,2,c)$ for every $c\in\mathcal{C}$. Although duopoly with only type $2$ firms is \emph{not} an absorbing state for the industry, we can calculate the equilibrium payoffs for that market structure without knowledge of the firms' payoffs in all possible future market structures.\footnote{Without Assumption \ref{ass:noRegress}, the right-hand side of \eqref{eq:twoTypesBellmanHighHigh} would include the post-survival continuation values from market structures with a type $1$ firm. That would make the computation of $v_S(2,2,c)$ dependent on those continuation values and so make our algorithmic approach inapplicable.} Furthermore, the calculated $v_S(2,2,c)$ gives the only possible equilibrium continuation value, because \eqref{eq:twoTypesBellmanHighHigh} provides a \emph{necessary} condition. These insights, adapted from \citet*{vsmpidTheory18:abbringetal}, give a starting point for unraveling the model's equilibrium payoffs. We set our candidate payoff $w_S(2,2,c)$ equal to this fixed point evaluated at $c$ and set
\[w_E(2,2,y)=\max\{0,w_S(2,2,c)-\kappa(y)\}.\]

\paragraph{Step 2: Calculation of $w_S(1,2,c)$, $w_E(1,2,y)$, $\alpha_E(2,y)$, and $\alpha_S(1,2,y)$.}

Moving down one panel in Figure \ref{fig:twoTypesFakeBackwardInduction}, we calculate payoffs for a type $1$ firm facing a type $2$ rival. This firm chooses survival whenever $a_S(2,1,y) v_S(1,2,c)+(1-a_S(2,1,y)) v_S(1,0,c)>\kappa(y)$. The next lemma gives a simpler expression for this condition in a natural Markov-perfect equilibrium.

\begin{lemma}
\label{lem:twoTypesLowSurvivalRule}
In a natural Markov-perfect equilibrium,
\[a_S(2,1,y) v_S(1,2,c)+(1-a_S(2,1,y)) v_S(1,0,c)>\kappa(y)\]
if and only if $v_S(1,2,c)>\kappa(y)$.
\end{lemma}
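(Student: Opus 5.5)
We need to show that $a_S(2,1,y)v_S(1,2,c)+(1-a_S(2,1,y))v_S(1,0,c)>\kappa(y)$ holds exactly when $v_S(1,2,c)>\kappa(y)$. The plan is a case split on the high-type firm's survival probability $a_S(2,1,y)$, using Definition \ref{def:naturalEquilibrium} with $h=2$, $l=1$, together with a monotonicity fact $v_S(1,2,c)\leq v_S(1,0,c)$ analogous to Lemma \ref{lem:twoTypesHighMonopolyBeatsDuopoly}.

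First I would establish the auxiliary inequality $v_S(1,2,c)\leq v_S(1,0,c)$ for all $c$. I expect to prove it the way Lemma \ref{lem:twoTypesHighMonopolyBeatsDuopoly} is presumably proved: a type-1 firm facing a type-2 rival can do no better than a type-1 monopolist. The argument would use $\pi(k,x,y)\leq\pi(k,0,y)$ from Assumption \ref{ass:monotoneBoundedProfits} and the fact that a monopolist can mimic the duopolist's continuation behavior while earning weakly higher flow profits. The mimicking is delicate because future entry into the monopoly also matters. The cleanest route I see is to compare the two post-survival values along a common history. The duopolist's future consists of phases facing a rival, possibly followed by monopoly phases once the rival exits, and each phase earns weakly less than the corresponding monopoly phase. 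A monopolist that mimics the duopolist's exit timing therefore gets at least the duopolist's payoff, since entrants into monopoly pay $\varphi(0,y)\leq\varphi(x,y)$ and a later entrant is no more advanced than the current rival. If Lemma \ref{lem:twoTypesHighMonopolyBeatsDuopoly}'s proof is a general "monopoly beats duopoly" argument, I would simply cite its analogue. This step is the main obstacle.

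With that inequality in hand, take the case $a_S(2,1,y)=1$. The left side is exactly $v_S(1,2,c)$, and the equivalence is trivial.

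Now take the case $a_S(2,1,y)<1$. Definition \ref{def:naturalEquilibrium} implies $a_S(1,2,y)=0$, so the type-1 firm exits. By the optimality condition \eqref{eq:aS}, its expected survival payoff $L\equiv a_S(2,1,y)v_S(1,2,c)+(1-a_S(2,1,y))v_S(1,0,c)$ therefore satisfies $L\leq\kappa(y)$, so $L>\kappa(y)$ is false. It remains to show $v_S(1,2,c)\leq\kappa(y)$. Because $L$ is a convex combination of $v_S(1,2,c)$ and $v_S(1,0,c)$, the auxiliary inequality gives $v_S(1,2,c)\leq L\leq\kappa(y)$. Hence both sides of the equivalence are false in this case.

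The two cases are exhaustive, so the stated equivalence holds for every $y$.
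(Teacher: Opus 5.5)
Your case split on $a_S(2,1,y)$ is fine. It uses the same ingredients as the paper: naturality, the optimality condition \eqref{eq:aS}, and the convex-combination step. So everything hinges on the auxiliary inequality $v_S(1,2,c)\leq v_S(1,0,c)$, and that is the part you leave unproved.

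The route you sketch does not work as stated. If the monopolist mimics the duopolist's exit timing over the entire future, the comparison must continue after the type-$2$ rival leaves, and there the phase-by-phase ordering fails. The former duopolist is then a monopolist. The mimicking firm, by contrast, may be sharing its market with an earlier entrant that has since advanced to type $2$. It then earns $\pi(K_t,2,Y_t)$ against the other's $\pi(K_t,0,Y_t)$, and it follows survival decisions designed for a situation it is not in.

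The appeal to $\varphi(0,y)\leq\varphi(x,y)$ does not help either. $\varphi(0,y)$ is the cost of entering an \emph{empty} market, and cheaper entry would only hurt the monopolist; entry costs play no role here. Nor can you simply cite an analogue of Lemma~\ref{lem:twoTypesHighMonopolyBeatsDuopoly}. There, neither firm's type can change and symmetry alone keeps the rival in the market. Here, the type-$1$ firm can advance, and symmetry says nothing about how a type-$2$ firm behaves against a type-$1$ rival.

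The missing idea is to truncate the comparison at a stopping time. Start both firms just after period $t_0$'s survival stage with $K_{t_0}=1$ and $C_{t_0}=c$. Let $\tau$ be the first period $t\geq t_0+1$ with $a_S(K_t,2,Y_t)<1$.
\begin{itemize}
\item Before $\tau$ the rival never leaves. When $K_t=1$, naturality turns $a_S(1,2,Y_t)=1$ into $a_S(2,1,Y_t)=1$. When $K_t=2$, the rival uses the same rule, $a_S(2,2,Y_t)=1$. This is where naturality enters the auxiliary inequality; your sketch never uses it.
\item Hence the duopolist earns exactly $\pi(K_t,2,Y_t)$ through $\tau$ and pays $\kappa(Y_t)$ before $\tau$.
\item At $\tau$ it exits with positive probability, so exit is a best response and its payoff from that survival stage onward is zero. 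The continuations in which the rival exits and the duopolist stays on as a monopolist therefore never need to be compared.
\item Now let the monopolist survive before $\tau$ and exit at $\tau$. This rule depends only on its own $K_t$ and $Y_t$, so it is feasible. The monopolist faces rivals of type at most $2$, so it earns weakly more by Assumption~\ref{ass:monotoneBoundedProfits}.\ref{ass:monotoneBoundedProfits:rivalType}. It pays the same costs and receives zero at $\tau$.
\end{itemize}
This gives $v_S(1,0,c)\geq v_S(1,2,c)$, which is what your second case needs.
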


In a natural Markov-perfect equilibrium, a type $2$ rival survives whenever the type $1$ firm does. This, Lemma \ref{lem:twoTypesLowSurvivalRule}, and \eqref{eq:vE} together give $v_E(1,2,y)=\max\{0,v_S(1,2,c)-\kappa(y)\}$. Therefore, any equilibrium post-survival value $v_S(1,2,c)$ must satisfy
\begin{equation}
\label{eq:twoTypesBellmanLowHigh}
\begin{split}
&v_S(1,2,c)%
=\beta\Bigl(\Pi_{11}\mathbb{E}\left[\pi(1,2,Y^\prime)+\max\{0,v_S(1,2,C^\prime)-\kappa(Y^\prime)\}\;\vline\;C=c\right]\\
&~~~~~~~~~~~~~~~~~~~~~~~~~~~~~~~~~~~~~~~~~~~~~~~~+ \Pi_{12}\mathbb{E}\left[\pi(2,2,Y^\prime)+v_E(2,2,Y^\prime)\;\vline\;C=c\right]
\Bigr).
\end{split}
\end{equation}

\noindent After replacing $v_E(2,2,y)$ in \eqref{eq:twoTypesBellmanLowHigh} with its only possible candidate value -- calculated in Step 1 -- the right-hand side of \eqref{eq:twoTypesBellmanLowHigh} defines a contraction mapping.  We set $w_S(1,2,c)$ to equal this mapping's unique fixed point evaluated at $c$ and
\[w_E(1,2,y)=\max\{0,w_S(1,2,c)-\kappa(y)\}.\]

We set the corresponding candidate entry rule to
\[\alpha_E(2,y)  =  \true\{w_E(1,2,y)>\varphi(2,y)\},\]
and the accompanying survival rule to
\[\alpha_S(1,2,y)  = \true\{w_S(1,2,c)>\kappa(y)\},\]
where we use $\true\{\cdot\}$ as an indicator for its argument's truth. 

To see that these candidate rules equal their equilibrium counterparts with probability one, consider any equilibrium. A potential entrant facing a type $2$ incumbent enters if $w_E(1,2,y)>\varphi(2,y)$ and stays out if $w_E(1,2,y)<\varphi(2,y)$. The candidate rule $\alpha_E(2,y)$ prescribes the same actions, so the two rules can differ only where the entrant is indifferent. Indifference means $w_E(1,2,y)=\varphi(2,y)$. The entry cost is strictly positive, so an indifferent entrant has $w_E(1,2,y)>0$. Its post-entry value then equals $w_S(1,2,c)-\kappa(y)$, and indifference therefore requires $w_S(1,2,c)-\kappa(y)=\varphi(2,y)$. Similarly, $a_S(1,2,y)$ can differ from $\alpha_S(1,2,y)$ only where $w_S(1,2,c)=\kappa(y)$. For each $c$, the events $\kappa(c,Z)+\varphi(2,(c,Z))=w_S(1,2,c)$ and $\kappa(c,Z)=w_S(1,2,c)$ have probability zero by Assumption \ref{ass:continuousShocks}.

\paragraph{Step 3: Calculation of $w_S(2,0,c)$, $w_E(2,0,y)$, $w_S(2,1,c)$,  and $w_E(2,1,y)$.}

The next step, shown in the bottom-left panel of Figure \ref{fig:twoTypesFakeBackwardInduction}, calculates post-survival values for a type $2$ firm facing either no rival or a type $1$ rival. Type $1$ rivals can come and go while a type $2$ firm remains active, so these two values depend upon each other.

A type $2$ monopolist facing the exogenous state $y$ survives when $v_S(2,0,c)>\kappa(y)$ and exits when $v_S(2,0,c)<\kappa(y)$. At equality, either action is optimal, but the resulting post-entry value is zero in either case. Thus, $v_E(2,0,y)=\max\left\{0,v_S(2,0,c)-\kappa(y)\right\}$. The post-entry payoff of a type $2$ firm facing a type $1$ rival depends on $a_S(1,2,y)$.  Equations \eqref{eq:vE} and \eqref{eq:aS} therefore give
\[v_E(2,1,y)=\max\left\{0,a_S(1,2,y)v_S(2,1,c)+\left(1-a_S(1,2,y)\right)v_S(2,0,c)-\kappa(y)\right\}.\]
Next period, a surviving type $2$ monopolist faces a type $1$ entrant if the potential entrant enters and no rival otherwise. A surviving type $2$ duopolist's type $1$ rival either keeps its type or advances to type $2$. Consequently, $v_S(2,0,c)$ and $v_S(2,1,c)$ together satisfy
\begin{equation}
\label{eq:twoTypesBellmanHighMonopoly}
\begin{split}
v_S(2,0,c) =\beta\mathbb{E}\Bigl[\pi(2,0,Y^\prime)&+ a_E(2,Y^\prime)\max\Bigl\{0,a_S(1,2,Y^\prime)v_S(2,1,C^\prime)\\
&\hspace{6em}+\left(1-a_S(1,2,Y^\prime)\right)v_S(2,0,C^\prime)-\kappa(Y^\prime)\Bigr\} \\
&+\left(1-a_E(2,Y^\prime)\right)\max\Bigl\{0,v_S(2,0,C^\prime)-\kappa(Y^\prime)\Bigr\}\;\vline\;C=c\Bigr]
\end{split}
\end{equation}
and
\begin{equation}
\label{eq:twoTypesBellmanHighLow}
\begin{split}
v_S(2,1,c) = \beta \Bigl(&\Pi_{11}\mathbb{E}\Bigl[\pi(2,1,Y^\prime)+\max\Bigl\{0,a_S(1,2,Y^\prime)v_S(2,1,C^\prime)\\
&\hspace{6em}+\left(1-a_S(1,2,Y^\prime)\right)v_S(2,0,C^\prime)-\kappa(Y^\prime)\Bigr\}\;\vline\;C=c\Bigr]\\
& +\Pi_{12}\mathbb{E}[\pi(2,2,Y^\prime)+v_E(2,2,Y^\prime)\;\vline\;C=c]\Bigr).
\end{split}
\end{equation}

\noindent Step 1 established that $v_E(2,2,y)=w_E(2,2,y)$. Step 2 established that $a_E(2,(c,Z))=\alpha_E(2,(c,Z))$ and $a_S(1,2,(c,Z))=\alpha_S(1,2,(c,Z))$ with probability one. Replacing $v_E(2,2,Y^\prime)$, $a_E(2,Y^\prime)$, and $a_S(1,2,Y^\prime)$ in the right-hand sides of \eqref{eq:twoTypesBellmanHighMonopoly} and \eqref{eq:twoTypesBellmanHighLow} with their candidate counterparts therefore leaves those right-hand sides unchanged. After these replacements, the right-hand sides define a contraction mapping on pairs of functions of $c$. Its fixed point gives the only possible values for $v_S(2,0,c)$ and $v_S(2,1,c)$. Therefore, we set our candidate payoffs $w_S(2,x,c)$ (for $x\in\{0,1\}$) to this fixed point, and set
\begin{align*}
w_E(2,0,y) & =\max\{0,w_S(2,0,c)-\kappa(y)\}, \textrm{ and } \\
w_E(2,1,y) & =\max\{0,\alpha_S(1,2,y)w_S(2,1,c)+(1-\alpha_S(1,2,y))w_S(2,0,c)-\kappa(y)\}.
\end{align*}

The next lemma orders the values calculated so far. The proofs of Lemmas \ref{lem:twoTypesLowMonopolyBeatsDuopoly}, \ref{lem:twoTypesCandidateNatural}, and \ref{lem:twoTypesCalculatedMonopolyValueOrder} use this ordering.

\begin{lemma}
The calculated values $w_S(2,0,c)$, $w_S(2,1,c)$, $w_S(2,2,c)$, and $w_S(1,2,c)$ satisfy \[w_S(2,0,c)\geq w_S(2,1,c)\geq w_S(2,2,c)\geq w_S(1,2,c)\] for all $c\in\mathcal{C}$.\label{lem:twoTypesCalculatedHighValueOrder}
\end{lemma}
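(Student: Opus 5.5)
We prove the three inequalities separately, each by comparing fixed points of contraction mappings: if an operator $T$ is monotone, is a contraction, and maps the set $\{f: f\geq g\}$ into itself (or we show $Tg\geq g$), then its fixed point dominates $g$. Each $w_S$ is the fixed point of a monotone Blackwell contraction built from $\max\{0,\cdot\}$ terms. So it suffices to plug the candidate smaller function into the operator defining the larger one and check that the result is at least as large.

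\textbf{Third inequality, $w_S(2,2,c)\geq w_S(1,2,c)$.} Let $T_{12}$ denote the right-hand side of \eqref{eq:twoTypesBellmanLowHigh} with $v_E(2,2,\cdot)=w_E(2,2,\cdot)$ inserted. We will show $T_{12}w_S(2,2,\cdot)\leq w_S(2,2,\cdot)$; monotonicity and contraction then give $w_S(1,2,\cdot)\leq w_S(2,2,\cdot)$. Evaluated at $w_S(2,2,\cdot)$, $T_{12}$ is a $\Pi_{11},\Pi_{12}$ mixture of
\[\beta\mathbb{E}[\pi(1,2,Y')+\max\{0,w_S(2,2,C')-\kappa(Y')\}\mid c] \quad\text{and}\quad \beta\mathbb{E}[\pi(2,2,Y')+w_E(2,2,Y')\mid c].\]
Both terms are at most $w_S(2,2,c)$ by \eqref{eq:twoTypesBellmanHighHigh}, using $\pi(1,2,y)\leq\pi(2,2,y)$ from Assumption \ref{ass:monotoneBoundedProfits}.1.

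\textbf{Remaining inequalities, $w_S(2,0)\geq w_S(2,1)\geq w_S(2,2)$.} These involve the joint Step 3 operator $T$ on pairs $(f_0,f_1)$. Consider the closed set
\[S=\{(f_0,f_1): f_0\geq f_1\geq w_S(2,2,\cdot)\}.\]
If $T$ maps $S$ into itself, the fixed point lies in $S$, which is exactly the claim. So take $(f_0,f_1)\in S$ and write $g(y)=\alpha_S(1,2,y)f_1(c)+(1-\alpha_S(1,2,y))f_0(c)$, which lies between $f_1$ and $f_0$.

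\emph{Second component.} It is a mixture with weights $\Pi_{11},\Pi_{12}$. The $\Pi_{11}$ term is $\beta\mathbb{E}[\pi(2,1,Y')+\max\{0,g-\kappa\}]$, which is at least $\beta\mathbb{E}[\pi(2,2,Y')+\max\{0,w_S(2,2,C')-\kappa\}]=w_S(2,2,c)$, because $\pi(2,1)\geq\pi(2,2)$ and $g\geq f_1\geq w_S(2,2)$. The $\Pi_{12}$ term equals $w_S(2,2,c)$. Hence $(Tf)_1\geq w_S(2,2)$.

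\emph{First component.} We need $(Tf)_0\geq(Tf)_1$. $(Tf)_0$ is $\beta\mathbb{E}[\pi(2,0,Y')+\alpha_E\max\{0,g-\kappa\}+(1-\alpha_E)\max\{0,f_0-\kappa\}]$, which is at least $\beta\mathbb{E}[\pi(2,0)+\max\{0,g-\kappa\}]$ since $f_0\geq g$. We compare this with each term of the mixture defining $(Tf)_1$:
\begin{itemize}
\item against the $\Pi_{11}$ term: $\pi(2,0)\geq\pi(2,1)$ with the same max term;
\item against the $\Pi_{12}$ term $w_S(2,2,c)$: $\pi(2,0)\geq\pi(2,2)$ and $\max\{0,g-\kappa\}\geq\max\{0,w_S(2,2)-\kappa\}=w_E(2,2)$.
\end{itemize}
So $(Tf)_0$ dominates both terms, hence their mixture, and $T(S)\subseteq S$.

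The main obstacle is checking that the Step 3 operator is monotone. The max terms are increasing in $(f_0,f_1)$ because the weights $\alpha_S(1,2,\cdot)$ and $\alpha_E(2,\cdot)$ are fixed from Step 2, so monotonicity holds. Given that, only the invariance of $S$ is needed, since $S$ is closed and the contraction's iterates starting in $S$ converge within it.
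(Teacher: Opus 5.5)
Your proof is correct and matches the paper's argument: the paper likewise shows that the Step 3 operator maps $\{(f_0,f_1): f_0\geq f_1\geq w_S(2,2,\cdot)\}$ into itself, and that the operator from \eqref{eq:twoTypesBellmanLowHigh} maps $\{f: f\leq w_S(2,2,\cdot)\}$ into itself, using the same profit orderings you invoke. You spell out the term-by-term comparisons that the paper leaves implicit; note also that monotonicity of the Step 3 operator is not needed once you have shown that the closed set $S$ is invariant.
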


\paragraph{Step 4: Calculation of $w_S(1,1,c)$, $w_E(1,1,y)$, and $\alpha_E(1,y)$.}

Proceed to the upper-right panel of Figure \ref{fig:twoTypesFakeBackwardInduction} and consider a duopoly market with two type $1$ firms. The survival game played out between these firms resembles the one depicted in Figure \ref{fig:duopolyReducedForm}.  The next lemma extends Lemma \ref{lem:twoTypesHighMonopolyBeatsDuopoly}'s ordering of monopoly and duopoly continuation values to type $1$ firms, which again yields a unique symmetric Nash equilibrium payoff.

\begin{lemma}
\label{lem:twoTypesLowMonopolyBeatsDuopoly}
In any equilibrium, \[v_S(1,0,c)\geq v_S(1,1,c)\] for all $c\in\mathcal{C}$.
\end{lemma}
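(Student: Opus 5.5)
The plan is a contraction-style comparison. Let $D(c)\equiv v_S(1,1,c)-v_S(1,0,c)$ and $M\equiv\max_{c\in\mathcal{C}}D(c)$, which is finite by Assumption \ref{ass:finitePersistentState}. I suppose $M>0$ and aim to show $M\leq\beta M$, which contradicts $\beta<1$. To do this, I write \eqref{eq:vS} for $v_S(1,0,c)$ and $v_S(1,1,c)$. I couple the own-type transition so that the firm moves to the same $K^\prime\in\{1,2\}$ with probability $\Pi_{1K^\prime}$ in both branches, and I use the same $Y^\prime$ in both. Then $D(c)$ is $\beta$ times an expectation of differences of the form
\[\pi(K^\prime,X^\prime,Y^\prime)+v_E(K^\prime,X^\prime,Y^\prime)-\pi(K^\prime,0,Y^\prime)-\bigl[a_E(K^\prime,Y^\prime)v_E(K^\prime,1,Y^\prime)+(1-a_E(K^\prime,Y^\prime))v_E(K^\prime,0,Y^\prime)\bigr].\]
Here $X^\prime\in\{1,2\}$ is the duopoly rival's new type. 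By Assumption \ref{ass:monotoneBoundedProfits}, the profit part of each difference is nonpositive. So it suffices to bound each post-entry difference by $M$, and on most branches by $0$.

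\textbf{Branches with $K^\prime=2$.} All the post-entry values involved were computed in Steps 1--3. I will show $v_E(2,X^\prime,y)\leq v_E(2,x,y)$ for $x\in\{0,1\}$ and $X^\prime\geq x$. This follows from Lemma \ref{lem:twoTypesCalculatedHighValueOrder}, $w_S(2,0,c)\geq w_S(2,1,c)\geq w_S(2,2,c)$, together with the explicit max-formulas for $w_E(2,0,y)$, $w_E(2,1,y)$ and $w_E(2,2,y)$. The formula for $w_E(2,1,y)$ averages $w_S(2,1,c)$ and $w_S(2,0,c)$, so it lies between $w_E(2,2,y)$ and $w_E(2,0,y)$. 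The one delicate case is $X^\prime=1$ in the duopoly branch against no entry in the monopoly branch, and there $v_E(2,1,y)\leq v_E(2,0,y)$ holds. So these branches contribute at most $0$.

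\textbf{Branches with $K^\prime=1$.} I need two bounds. First, for the duopoly continuation game between two type $1$ firms, I bound $v_E(1,1,y)$ without assuming the lemma. In any symmetric Nash equilibrium of that $2\times2$ game, the payoff is $0$ whenever exit is played with positive probability. Otherwise both firms survive for sure, which requires $v_S(1,1,c)\geq\kappa(y)$. Hence
\[v_E(1,1,y)\leq\max\{0,v_S(1,1,c)-\kappa(y)\}\leq\max\{0,v_S(1,0,c)-\kappa(y)\}+\max\{0,D(c)\}.\]
Since $v_E(1,0,y)=\max\{0,v_S(1,0,c)-\kappa(y)\}$, this gives $v_E(1,1,y)\leq v_E(1,0,y)+M$. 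When the monopoly branch instead has an entrant, its own payoff is also $v_E(1,1,y)$, so that difference is $0$. Second, I need $v_E(1,2,y)=\max\{0,v_S(1,2,c)-\kappa(y)\}$ to be bounded by the monopoly-branch payoff plus $M$. For this it is enough to have $v_S(1,2,c)\leq v_S(1,1,c)$ and $v_S(1,2,c)\leq v_S(1,0,c)$.

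\textbf{Main obstacle.} The hard step is exactly this comparison of $v_S(1,2,c)$ with $v_S(1,1,c)$ and $v_S(1,0,c)$. I will handle it with the same device. Let $N\equiv\max_c\bigl(v_S(1,2,c)-\min\{v_S(1,1,c),v_S(1,0,c)\}\bigr)$. I compare \eqref{eq:twoTypesBellmanLowHigh} with the Bellman equations of the other two states, coupled as above. When the own type reaches $2$, I use $w_E(2,2,y)\leq w_E(2,1,y)\leq w_E(2,0,y)$ from Lemma \ref{lem:twoTypesCalculatedHighValueOrder}. When the own type stays $1$, I use the max-Lipschitz bound, and profits are ordered by Assumption \ref{ass:monotoneBoundedProfits}.\ref{ass:monotoneBoundedProfits:rivalType}. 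This yields $N\leq\beta\max\{N,0\}$, so $N\leq0$. If that joint argument gets tangled by the entry branch, I would fall back on proving the two claims simultaneously. In that version I take $\max\{M,N\}$ over both differences and show it is at most $\beta\max\{M,N,0\}$, which gives $M\leq 0$ and hence the lemma.
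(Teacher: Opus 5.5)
Your argument is correct, but it takes a different route from the paper. The paper argues pathwise, as in Lemmas \ref{lem:twoTypesHighMonopolyBeatsDuopoly} and \ref{lem:twoTypesLowSurvivalRule}. It gives the type $1$ monopolist a feasible plan: survive until the stopping time $\tau$ (the first period in which its type reaches $2$ or the type $1$ duopolist might exit), exit at $\tau$ if still of type $1$, and revert to equilibrium play otherwise. It couples the two subgames so that the monopolist's rival type never exceeds the duopolist's, and Assumption \ref{ass:monotoneBoundedProfits} together with $v_E(2,0,y)\geq v_E(2,1,y)\geq v_E(2,2,y)$ finishes the comparison.

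You instead difference the one-period Bellman equations and run a maximum-principle argument ($M\leq\beta M$). This avoids the deviation strategy, the stopping time, and the sequential coupling, which in the paper relies on the private randomization device to simulate the duopolist's history. It uses only the static structure of the symmetric survival games. The price is the auxiliary inequality $v_S(1,2,c)\leq\min\{v_S(1,1,c),v_S(1,0,c)\}$, which the paper's coupling never needs. Half of it, $v_S(1,2,c)\leq v_S(1,0,c)$, is already proved inside the proof of Lemma \ref{lem:twoTypesLowSurvivalRule}; that argument does not use the hypothesis $v_S(1,2,c)>\kappa(y)$, so you could cite it. The other half appears in the paper only for candidate values, in the proof of Lemma \ref{lem:twoTypesCalculatedMonopolyValueOrder}, so you get it at the equilibrium level as a by-product.

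One step needs to be made explicit. Your $M$-step bounds $v_E(1,1,y)$ only from above, but the $N$-step needs a \emph{lower} bound. In the own-type-$1$ branch you compare $\max\{0,v_S(1,2,c)-\kappa(y)\}$ with $v_E(1,1,y)$ (against $v_S(1,1,c)$) and with $a_E(1,y)v_E(1,1,y)+(1-a_E(1,y))v_E(1,0,y)$ (against $v_S(1,0,c)$). You cannot use $v_E(1,1,y)=\max\{0,v_S(1,1,c)-\kappa(y)\}$ there. That identity is what the lemma is meant to deliver, and it can fail when $v_S(1,0,c)<\kappa(y)<v_S(1,1,c)$, because both firms may then coordinate on exit or mix. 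What does hold is
\[v_E(1,1,y)\geq\max\{0,\min\{v_S(1,1,c),v_S(1,0,c)\}-\kappa(y)\}.\]
To see this, note that $v_E(1,1,y)$ equals either $v_S(1,1,c)-\kappa(y)\geq 0$ or $0$. It equals $0$ only if $a_S(1,1,y)=0$, which forces $v_S(1,0,c)\leq\kappa(y)$, or each firm is indifferent, which forces the minimum to be at most $\kappa(y)$. This is exactly why the minimum in your definition of $N$ is needed.

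With this bound, the remaining ingredients are $v_E(1,0,y)=\max\{0,v_S(1,0,c)-\kappa(y)\}$, the formula $v_E(1,2,y)=\max\{0,v_S(1,2,c)-\kappa(y)\}$ from Lemma \ref{lem:twoTypesLowSurvivalRule} and naturalness, and the ordering of the type-$2$ post-entry values in the own-type-$2$ branch. Both comparisons are then bounded by $\beta\Pi_{11}\max\{N,0\}$. So $N\leq 0$ follows on its own, and your joint fallback is unnecessary.
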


With $v_S(1,0,c)\geq v_S(1,1,c)$, the survival game between two type $1$ firms has the same three cases as the game between two type $2$ firms in Step 1. If $v_S(1,1,c)>\kappa(y)$, survival is strictly dominant. If $v_S(1,0,c)<\kappa(y)$, exit is strictly dominant. Otherwise, any symmetric equilibrium leaves both firms indifferent, with payoff zero. Hence $v_E (1,1,y)  = \max\left\{0,v_S(1,1,c)-\kappa(y)\right\}$ in any equilibrium, and equilibrium continuation values satisfy
\begin{equation} \label{eq:twoTypesBellmanLowLow}
\begin{split}
v_S (&1,1,c) =  \beta\Pi^2_{11}\mathbb{E}\left[\pi(1,1,Y^\prime)+\max\left\{0,v_S(1,1,C^\prime)-\kappa(Y^\prime)\right\}\;\vline\;C=c\right] \\
	& +\beta\Pi_{11}\Pi_{12}\mathbb{E}\left[\pi(1,2,Y^\prime)+v_E(1,2,Y^\prime)+\pi(2,1,Y^\prime)+v_E(2,1,Y^\prime)\;\vline\;C=c\right] \\
	 & +\beta\Pi^2_{12}\mathbb{E}\left[\pi(2,2,Y^\prime)+v_E(2,2,Y^\prime)\;\vline\;C=c\right].
\end{split}
\end{equation}

\noindent The right-hand side of \eqref{eq:twoTypesBellmanLowLow} contains three post-entry values: $v_E(1,2,Y^\prime)$, $v_E(2,2,Y^\prime)$, and $v_E(2,1,Y^\prime)$. Steps 1 and 2 showed that the first two equal their candidate counterparts, $w_E(1,2,y)$ and $w_E(2,2,y)$, for all $y\in\mathcal{Y}$. The third is the value of a type $2$ firm facing a type $1$ rival. Step 3's expressions for it and for its candidate counterpart $w_E(2,1,y)$ have the same form. The first weights $v_S(2,1,c)$ and $v_S(2,0,c)$ with the rival's equilibrium survival rule $a_S(1,2,y)$. The second weights $w_S(2,1,c)$ and $w_S(2,0,c)$ with the candidate rule $\alpha_S(1,2,y)$. Step 3 set these post-survival values equal, so the two expressions can differ only where $a_S(1,2,y)$ differs from $\alpha_S(1,2,y)$. Step 2 showed that these rules agree with probability one. Hence $v_E(2,1,(c,Z))=w_E(2,1,(c,Z))$ with probability one. Replacing the three equilibrium values in the right-hand side of \eqref{eq:twoTypesBellmanLowLow} with their candidate counterparts therefore leaves it unchanged. After these replacements, the only undetermined function on the right-hand side is $c\mapsto v_S(1,1,c)$. The resulting operator is a contraction with this function as its unique fixed point. We set $w_S(1,1,c)$ to this fixed point evaluated at $c$ and set
\[ w_E(1,1,y) = \max\{0,w_S(1,1,c)-\kappa(y)\}\]
and
\[\alpha_E(1,y)=\true\{w_E(1,1,y)>\varphi(1,y)\}. \]
We now show that this candidate rule equals its equilibrium counterpart with probability one. Because $v_S(1,1,c)=w_S(1,1,c)$, entering a type $1$ monopoly is worth $w_E(1,1,y)$ in any equilibrium. The rules $a_E(1,y)$ and $\alpha_E(1,y)$ can therefore differ only where the entrant is indifferent. As in Step 2, the strictly positive entry cost makes indifference require $w_S(1,1,c)-\kappa(y)=\varphi(1,y)$. For each $c$, the event $\kappa(c,Z)+\varphi(1,(c,Z))=w_S(1,1,c)$ has probability zero by Assumption \ref{ass:continuousShocks}.

\paragraph{Step 5: Calculation of $w_S(1,0,c)$ and $w_E(1,0,y)$.}

As shown in the middle-right panel of Figure \ref{fig:twoTypesFakeBackwardInduction}, the only remaining unconstructed payoff is that for a type $1$ monopolist. Such a firm survives when $v_S(1,0,c)>\kappa(y)$ and exits when $v_S(1,0,c)<\kappa(y)$, so $v_E(1,0,y) =  \max\left\{0,v_S(1,0,c)-\kappa(y)\right\}$. If it survives, its own type transition and the potential entrant's decision together place it into one of four market structures in the next period. Consequently,
\begin{align}
 v_S(1,0,c) = &\beta\Pi_{11}\mathbb{E}\Bigl[\pi(1,0,Y^\prime)+a_E(1,Y^\prime)v_E(1,1,Y^\prime) \label{eq:twoTypesBellmanLowMonopoly}\\
& +\left(1-a_E(1,Y^\prime)\right)\max\left\{0,v_S(1,0,C^\prime)-\kappa(Y^\prime)\right\}\;\vline\;C=c\Bigr]\nonumber\\
&+\beta\Pi_{12}\mathbb{E}\Bigl[\pi(2,0,Y^\prime)+a_E(2,Y^\prime)v_E(2,1,Y^\prime) \nonumber\\
 & +\left(1-a_E(2,Y^\prime)\right)v_E(2,0,Y^\prime)\;\vline\;C=c\Bigr].\nonumber
\end{align}

\noindent Steps 1--4 established that $v_E(1,1,y)=w_E(1,1,y)$ and $v_E(2,0,y)=w_E(2,0,y)$ for all $y\in\mathcal{Y}$ and that $v_E(2,1,(c,Z))=w_E(2,1,(c,Z))$ with probability one. Steps 2 and 4 also established that $a_E(1,(c,Z))=\alpha_E(1,(c,Z))$ and $a_E(2,(c,Z))=\alpha_E(2,(c,Z))$ with probability one. Replacing these equilibrium values and rules in the right-hand side of \eqref{eq:twoTypesBellmanLowMonopoly} with their candidate counterparts therefore leaves it unchanged. After these replacements, the right-hand side defines a contraction mapping with $c\mapsto v_S(1,0,c)$ as its unique fixed point. We set $w_S(1,0,c)$ equal to this fixed point evaluated at $c$ and set
\[ w_E(1,0,y) = \max\{0,w_S(1,0,c)-\kappa(y)\}.\]

\paragraph{Step 6: Calculation of $\alpha_S(2,0,y)$, $\alpha_S(1,0,y)$, $\alpha_S(2,1,y)$, $\alpha_E(0,y)$, $\alpha_S(2,2,y)$ and $\alpha_S(1,1,y)$.}

The previous five steps determined all candidate payoffs. All that remains is to determine the candidate's survival rules for firms facing rivals with weakly lower types and its rule for entry into an empty market. We begin with three rules that are best responses to the candidate payoffs and to candidate rules already in hand.

The candidate's survival rule for a monopolist of type $k\in\{1,2\}$ is
\[\alpha_S(k,0,y)=\true\{w_S(k,0,c)>\kappa(y)\}.\]
Its survival rule for a type $2$ firm facing a type $1$ rival takes the rival's rule $\alpha_S(1,2,y)$ from Step 2 as given:
\begin{equation}
\label{eq:twoTypesSurvivalRuleHighLow}
\alpha_S(2,1,y)=\true\left\{\alpha_S(1,2,y)w_S(2,1,c)
+\left(1-\alpha_S(1,2,y)\right)w_S(2,0,c)>\kappa(y)\right\}.
\end{equation}
Its rule for entry into an empty market takes as given the rule $\alpha_E(1,y)$ from Step 4, which the second potential entrant follows:
\[\alpha_E(0,y)=\true\left\{\alpha_E(1,y)w_E(1,1,y)
+\left(1-\alpha_E(1,y)\right)w_E(1,0,y)>\varphi(0,y)\right\}.\]

The remaining rule governs two firms of the same type $k\in\{1,2\}$. Each responds to the other's use of the same rule, so we set it to a symmetric Nash equilibrium of the static game in Figure \ref{fig:duopolyReducedForm}, or of its analog for two type $1$ firms, with the candidate values as payoffs:
\[
\alpha_S(k,k,y) = \left\{\begin{array}{ll}
1&\text{if }w_S(k,k,c)>\kappa(y),\\
\frac{w_S(k,0,c)-\kappa(y)}{w_S(k,0,c)
        -w_S(k,k,c)}& \text{if }w_S(k,k,c)\leq \kappa(y) \text{ and }w_S(k,0,c)>\kappa(y),\text{ and}\\
0&\text{otherwise.}
\end{array}\right.
\]
This rule has both firms survive when duopoly survival is profitable, mix so that each is indifferent when only monopoly survival is profitable, and exit otherwise. This completes the construction of the candidate strategy.

Two tasks remain. The first is to show that the candidate strategy forms a natural Markov-perfect equilibrium. The second is to show that it is the only one, up to events of probability zero. Steps 1 through 5 began the second task: they showed that any equilibrium's post-survival values equal the candidate's.

\paragraph{Existence.}

The candidate strategy satisfies Definition \ref{def:naturalEquilibrium}'s naturality restriction.
\begin{lemma}\label{lem:twoTypesCandidateNatural}
The calculated $\alpha_S(1,2,y)$ and $\alpha_S(2,1,y)$ satisfy
\[ \alpha_S(1,2,y)>0 \implies \alpha_S(2,1,y)=1\] for all $y\in\mathcal{Y}$.
\end{lemma}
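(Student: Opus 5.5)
The plan is to unpack the two indicator definitions and compare them directly, using the value ordering of Lemma \ref{lem:twoTypesCalculatedHighValueOrder}. Fix $y=(c,z)$ and suppose $\alpha_S(1,2,y)>0$. Since $\alpha_S(1,2,y)=\true\{w_S(1,2,c)>\kappa(y)\}$ takes only the values $0$ and $1$, the hypothesis is equivalent to $\alpha_S(1,2,y)=1$. That in turn is equivalent to $w_S(1,2,c)>\kappa(y)$.

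Next, substitute $\alpha_S(1,2,y)=1$ into the definition \eqref{eq:twoTypesSurvivalRuleHighLow}. The weighted average inside the indicator collapses to $w_S(2,1,c)$, so $\alpha_S(2,1,y)=\true\{w_S(2,1,c)>\kappa(y)\}$. Lemma \ref{lem:twoTypesCalculatedHighValueOrder} gives the chain $w_S(2,1,c)\geq w_S(2,2,c)\geq w_S(1,2,c)$. Combined with $w_S(1,2,c)>\kappa(y)$, this yields $w_S(2,1,c)>\kappa(y)$, and hence $\alpha_S(2,1,y)=1$, as claimed.

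I expect no step to be a real obstacle, because the substantive work sits in Lemma \ref{lem:twoTypesCalculatedHighValueOrder}. In particular, that lemma supplies the inequality $w_S(2,1,c)\geq w_S(1,2,c)$, that a high type facing a low rival is worth at least as much as a low type facing a high rival. The only care needed is to note that the implication holds for every $y\in\mathcal{Y}$, not merely almost surely. This is so because both candidate rules are defined pointwise from the candidate values, with no appeal to Assumption \ref{ass:continuousShocks}.
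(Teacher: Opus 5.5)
Your proof is correct and is essentially the paper's argument: both read $w_S(1,2,c)>\kappa(y)$ off the Step 2 rule and push it up the chain in Lemma \ref{lem:twoTypesCalculatedHighValueOrder}. The only difference is cosmetic. You collapse the weighted average in \eqref{eq:twoTypesSurvivalRuleHighLow} by using $\alpha_S(1,2,y)=1$, whereas the paper notes that $w_S(2,1,c)$ and $w_S(2,0,c)$ both exceed $\kappa(y)$, so the average does too whatever the weight.
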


It remains to show that no firm gains from a one-shot deviation from the candidate strategy, so that it satisfies \eqref{eq:aE} and \eqref{eq:aS}. Every candidate rule but one is a best response by construction.

The exception is the survival rule of a type $1$ firm facing a type $2$ rival. Step 2 set $\alpha_S(1,2,y)=\true\{w_S(1,2,c)>\kappa(y)\}$. This rule compares the continuation cost with $w_S(1,2,c)$, the firm's value when its rival survives. The firm's best response compares that cost with a weighted average of $w_S(1,2,c)$ and $w_S(1,0,c)$. The weights are the probabilities that the rival's candidate rule assigns to survival and exit:
\[ \tilde{\alpha}_S(1,2,y)=\true\{\alpha_S(2,1,y)w_S(1,2,c)+\left(1-\alpha_S(2,1,y)\right)w_S(1,0,c)>\kappa(y)\}.\]
We now show that $\tilde{\alpha}_S(1,2,y)=\alpha_S(1,2,y)$. This requires one more ordering of the candidate values: a type $1$ monopolist is worth at most as much as a type $2$ monopolist.
\begin{lemma}
\label{lem:twoTypesCalculatedMonopolyValueOrder}
The calculated $w_S(1,0,c)$ and $w_S(2,0,c)$ satisfy
\[ w_S(1,0,c)\leq w_S(2,0,c)\]
for all $c\in\mathcal{C}$.
\end{lemma}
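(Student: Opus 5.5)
The plan is to exploit the fact that $w_S(1,0,\cdot)$ is the unique fixed point of the Step 5 contraction $T_1$ (the right-hand side of \eqref{eq:twoTypesBellmanLowMonopoly} with candidate values and rules inserted). $T_1$ is monotone. The set $\{f:\ f(c)\leq w_S(2,0,c)\ \forall c\}$ is closed. Hence it suffices to show that $T_1$ maps this set into itself: iterating $T_1$ from any point of the set then converges to $w_S(1,0,\cdot)$ inside it. Using $\Pi_{11}+\Pi_{12}=1$, I would compare the two pieces of $T_1f(c)$ with the Step 3 identity
\[w_S(2,0,c)=\beta\mathbb{E}\bigl[\pi(2,0,Y')+\alpha_E(2,Y')w_E(2,1,Y')+(1-\alpha_E(2,Y'))w_E(2,0,Y')\mid C=c\bigr].\]
Here I use that the candidate rules agree with the equilibrium ones almost surely, so Step 3's fixed point satisfies this with candidate objects. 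The $\Pi_{12}$ piece of $T_1f$ is exactly $\beta$ times this bracket, so it matches. It therefore remains to bound the $\Pi_{11}$ bracket pointwise in $y'$:
\[\pi(1,0,y')+\alpha_E(1,y')w_E(1,1,y')+(1-\alpha_E(1,y'))\max\{0,f(c')-\kappa(y')\}\]
\[\leq \pi(2,0,y')+\alpha_E(2,y')w_E(2,1,y')+(1-\alpha_E(2,y'))w_E(2,0,y').\]
The profit terms are ordered by Assumption \ref{ass:monotoneBoundedProfits}.\ref{ass:monotoneBoundedProfits:ownType}.

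For the continuation terms I split on $\alpha_E(2,y')$. If $\alpha_E(2,y')=0$, the right side is $w_E(2,0,y')=\max\{0,w_S(2,0,c')-\kappa(y')\}$. This dominates $\max\{0,f(c')-\kappa(y')\}$ because $f\leq w_S(2,0,\cdot)$. It also dominates $w_E(1,1,y')=\max\{0,w_S(1,1,c')-\kappa(y')\}$ provided $w_S(1,1,c')\leq w_S(2,0,c')$. If instead $\alpha_E(2,y')=1$, I want $\alpha_E(1,y')=1$ as well. This follows from $\varphi(1,y')\leq\varphi(2,y')$ (Assumption \ref{ass:entryBarriers}) together with $w_E(1,1,y')\geq w_E(1,2,y')$, which in turn needs $w_S(1,1,c')\geq w_S(1,2,c')$. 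In that case the left side is $w_E(1,1,y')$. The right side is $w_E(2,1,y')\geq\max\{0,w_S(2,1,c')-\kappa(y')\}$, because by Lemma \ref{lem:twoTypesCalculatedHighValueOrder} the weighted average of $w_S(2,1,c')$ and $w_S(2,0,c')$ is at least $w_S(2,1,c')$. So this case needs $w_S(1,1,c')\leq w_S(2,1,c')$.

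The proof thus reduces to the auxiliary chain $w_S(1,2,c)\leq w_S(1,1,c)\leq w_S(2,1,c)$; the remaining needed inequality $w_S(1,1,c)\leq w_S(2,0,c)$ then follows from Lemma \ref{lem:twoTypesCalculatedHighValueOrder}. I expect this chain to be the main obstacle. I would prove each link by the same invariant-set device, using the Step 4 contraction for $w_S(1,1,\cdot)$. For the upper bound, I take $g\leq w_S(2,1,\cdot)$ and compare \eqref{eq:twoTypesBellmanLowLow} term by term with \eqref{eq:twoTypesBellmanHighLow}. The comparison uses three facts. First, $\pi(1,\cdot)\leq\pi(2,\cdot)$ and $\pi(k,2)\leq\pi(k,1)$. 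Second, $w_E(1,2)\leq w_E(2,2)$ and $w_E(2,2)\leq w_E(2,1)$, both from Lemma \ref{lem:twoTypesCalculatedHighValueOrder}. Third, the mixture weights $\Pi_{11}^2$, $\Pi_{11}\Pi_{12}$, $\Pi_{12}^2$ can be regrouped against $\Pi_{11}$, $\Pi_{12}$. For the lower bound, I take $g\geq w_S(1,2,\cdot)$ and compare with \eqref{eq:twoTypesBellmanLowHigh}, using that the $(1,1)$ state's continuation values dominate the $(1,2)$ state's analogues: $w_E(2,1)\geq w_E(2,2)$, together with $\pi(1,1)\geq\pi(1,2)$ and $\pi(2,1)\geq\pi(2,2)$. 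The delicate part is matching the regrouped probabilities so that every term on one side is paired with a dominating term on the other.
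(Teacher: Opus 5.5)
Your proposal is correct and follows essentially the paper's route: invariant-set arguments for the Step 4 operator give $w_S(1,2,c)\leq w_S(1,1,c)\leq w_S(2,1,c)$. These yield $\alpha_E(1,y)\geq\alpha_E(2,y)$ and $w_E(1,1,y)\leq\max\{0,w_S(2,1,c)-\kappa(y)\}$, and then the Step 5 operator maps $\{f\leq w_S(2,0,\cdot)\}$ into itself; your case split on $\alpha_E(2,y')$ simply spells out the paper's terse final comparison.

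One small point: the Step 3 identity does not require almost-sure agreement with equilibrium rules. By construction, $w_S(2,\cdot,\cdot)$ is the fixed point of the operator with the candidate objects already inserted, and appealing to an equilibrium here would be circular, since this lemma feeds the existence argument.
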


\noindent This order rests on Assumption \ref{ass:entryBarriers}. If entry were cheaper against a type $2$ incumbent than against a type $1$ incumbent, a type $2$ monopolist could attract more entry, and a type $1$ monopolist might then be the more valuable of the two.

There are two cases. If $\alpha_S(2,1,y)=1$, the rival survives for sure, and $\tilde{\alpha}_S(1,2,y)=\true\{w_S(1,2,c)>\kappa(y)\}=\alpha_S(1,2,y)$. If instead $\alpha_S(2,1,y)=0$, then Lemma \ref{lem:twoTypesCandidateNatural} gives $\alpha_S(1,2,y)=0$. Equation \eqref{eq:twoTypesSurvivalRuleHighLow} in turn gives $w_S(2,0,c)\leq\kappa(y)$. Lemma \ref{lem:twoTypesCalculatedMonopolyValueOrder} extends this to $w_S(1,0,c)\leq\kappa(y)$. A type $1$ firm that survived would be a monopolist unable to cover its continuation cost, so $\tilde{\alpha}_S(1,2,y)=0=\alpha_S(1,2,y)$.

The candidate strategy is therefore a best response to itself, and it forms a natural Markov-perfect equilibrium. The candidate payoffs $w_E$ and $w_S$ are its continuation values.

\paragraph{Essential uniqueness.}

Consider any equilibrium. Steps 1 through 5 showed that its payoffs equal the candidate's. Its rules and the candidate's are therefore best responses to the same payoffs, and they can differ only where a firm is indifferent between its actions. Because the shock is continuously distributed and entry costs are strictly positive, a firm facing a rival of a different type, or no rival, is indifferent with probability zero. Two firms of the same type can both be indifferent with positive probability. Steps 1 and 4 showed that their survival game then has exactly one symmetric equilibrium, in which both use the mixing probability $\alpha_S(k,k,y)$.

Any equilibrium strategy therefore equals the candidate strategy except on sets of states with probability zero. This is the sense in which the two-type game's equilibrium is essentially unique.


\section{Equilibrium Construction and Uniqueness\label{sec:constructionUniqueness}}
With the intuition from the case with $\check{k}=2$ in hand, we proceed to allow for an arbitrary $\check{k}\geq 1$. Our analysis centers on our algorithm for equilibrium calculation. 

\begin{alg}[Equilibrium Calculation]
\label{alg:candidateEquilibrium} Compute a candidate equilibrium strategy $(\alpha_E,\alpha_S)$ and payoffs $w_E$ and $w_S$ in
two stages.
\begin{enumerate}
\item Use Procedure \ref{proc:fakeBackwardInduction} to compute $w_E$, $w_S$, $\alpha_E(h,y)$ for all $(h,y)\in\mathbb{K}\times\mathcal{Y}$, and $\alpha_S(l,h,y)$ for all $(l,h,y)\in\mathbb{K}^2\times\mathcal{Y}$ such that $ l<h$.
\item Use the following procedure to compute the remaining components of the candidate strategy:
\begin{itemize}
\item for all $y\in\mathcal{Y}$, set 
\[\alpha_E(0,y)\leftarrow \true\left\{\alpha_E(1,y)w_E(1,1,y)
+\left(1-\alpha_E(1,y)\right)w_E(1,0,y)>\varphi(0,y)\right\};\]
\item for all $h\in\mathbb{K}$ and $y\in\mathcal{Y}$, set
\[\alpha_S(h,0,y)\leftarrow \true\left\{w_S(h,0,c)>\kappa(y)\right\};\]
\item for all $h\in\mathbb{K}$ and $y\in\mathcal{Y}$, set
\[
\alpha_S(h,h,y)\leftarrow
\begin{cases}
1 & \text{if }w_S(h,h,c)>\kappa(y),\\
\dfrac{w_S(h,0,c)-\kappa(y)}{w_S(h,0,c)-w_S(h,h,c)}
& \text{if }w_S(h,h,c)\leq\kappa(y)<w_S(h,0,c),\\
0 & \text{otherwise};
\end{cases}
\]
and
\item for all $(h,l,y)\in\mathbb{K}^2\times\mathcal{Y}$ with $l<h$,
\[\alpha_S(h,l,y)\leftarrow \true\left\{\alpha_S(l,h,y)w_S(h,l,c)
+\left(1-\alpha_S(l,h,y)\right)w_S(h,0,c)>\kappa(y)\right\}.\]
\end{itemize}
\end{enumerate}
\end{alg}

Procedure \ref{proc:fakeBackwardInduction} presents the algorithm's first stage as a flow chart. This stage applies FBI to compute all payoffs, the survival rule for a duopolist facing a rival with a strictly higher technology type, and the entry rules for potential entrants facing an incumbent. It begins by setting the candidate strategy $(\alpha_E,\alpha_S)$ and the payoffs $w_E$ and $w_S$ to arbitrary values. Two nested loops follow. The outer loop indexes the weakly better firm's technology type by $h$, which counts down from $\check{k}$ to $1$. Within each turn of the outer loop, an inner loop indexes the weakly worse firm's type by $l$, which counts down from $h$ to $1$. Each turn of the inner loop computes $w_S(l,h,c)$ as the fixed point of an operator $T_{lh}$. When $l<h$, it also computes the worse firm's survival rule. After the inner loop ends, the outer loop computes the entry rule $\alpha_E(h,y)$ for a potential entrant facing a type $h$ incumbent. It then computes the values $w_S(h,x,c)$ of a type $h$ firm facing no rival or a worse rival as a single fixed point of an operator $T_h$.


\begin{procedure}
\begin{center}

\vspace{-1cm}

\begin{tikzpicture}[start chain=going below, node distance=0.4cm and 1cm,
     start/.style={circle, white , fill=green!50!black, double=green!50!black, draw, very thick, text width=4.5em, text centered, scale=0.8},
    stop/.style={chamfered rectangle, white, fill=red, double=red, draw, very thick,inner ysep=1em, inner xsep=0.5em, scale=0.8},
    decision/.style={diamond,draw=blue, thick, fill=blue!20,text width=4.5em, text badly centered, inner sep=1pt, scale=0.8},
    io/.style={tape,draw=red!60,fill=red!30, text width=20em,text centered, minimum height=2em, scale=0.8},
    block/.style = {rectangle, draw=blue, thick, fill=blue!20, text width=7em, text centered, rounded corners, minimum height=2em, scale=0.8},
    medium block/.style = {rectangle, draw=blue, thick, fill=blue!20, text width=22em, text centered, rounded corners, minimum height=2em, scale=0.8},
    large block/.style = {rectangle, draw=blue, thick, fill=blue!20, text width=12em, text centered, rounded corners, minimum height=4em, scale=0.8},
    biggie block/.style = {rectangle, draw=blue, thick, fill=blue!20, text width=24em, text centered, rounded corners, minimum height=3.5em, scale=0.8},
    super biggie block/.style = {rectangle, draw=blue, thick, fill=blue!20, text width=35em, text centered, rounded corners, minimum height=3.5em, scale=0.8},
    line/.style = {draw, thick, -latex',rounded corners,shorten >=2pt, scale=0.8},
    information text/.style={rounded corners,fill=black!10,inner sep=1ex, text width=17em, scale=0.8, text justified}],

  \node [on chain, start] (start) {\textsc{START}};
  
  \node [on chain, super biggie block] (outer loop initialization) {$\alpha_S(k,x,y) \leftarrow 1, ~\alpha_E(x,y)\leftarrow 0, ~w_S(k,x,c)\leftarrow 0, ~w_E(k,x,y)\leftarrow 0, ~h\leftarrow \check{k}$};
  \node [on chain, block] (inner loop initialization) {$l\leftarrow h$};
  \node[on chain, medium block] (compute wL){\vspace{-.5cm}\begin{eqnarray*}
   f^\star(c)& \leftarrow & 0 \\
 f^\star(c) &\leftarrow & \lim_{r\rightarrow\infty} T^r_{lh}(f^\star)(c)\\
  w_S(l,h,c) & \leftarrow & f^\star(c)\\
  w_E(l,h,y) & \leftarrow & \max\left\{0,f^\star(c)-\kappa(y)\right\}
  \end{eqnarray*}};
  \node[on chain, decision] (test survival rule calculation) {$l<h$?};
    \begin{scope}[start branch=test survival rule calculation going right]
    \node[on chain, large block] (survival rule L) {$\alpha_S(l,h,y)\leftarrow \true\left\{w_S(l,h,c)>\kappa(y)\right\}$};
    \end{scope}
   \node[on chain, decision] (test inner loop) {$l=1$?};
    \begin{scope}[start branch=test inner loop going left, node distance=0.5cm and 2.2cm]
    \node[on chain, block] (inner loop decrement) {$l\leftarrow l-1$};
    \end{scope}
   \node[on chain, biggie block] (aE calculations) {\vspace{-.5cm}{$\alpha_E(h,y) \leftarrow \true\left\{w_E(1,h,y)>\varphi(h,y)\right\}$}};
   \node[on chain, super biggie block] (H calculations) {\vspace{-.5cm}\begin{eqnarray*}
   f^\dagger(x,c)& \leftarrow & 0\quad\textrm{for }0\leq x<h\\
	f^\dagger(x,c) & \leftarrow & \lim_{r\rightarrow\infty} T^r_h(f^\dagger)(x,c) \\
   	w_S(h,x,c) & \leftarrow & f^\dagger(x,c) ~ \textrm{for} ~ 0\leq x<h \\
	w_E(h,0,y) & \leftarrow & \max\left\{0,f^\dagger(0,c)-\kappa(y)\right\} \\
	w_E(h,x,y) & \leftarrow & \max\left\{0,\alpha_S(x,h,y)f^\dagger(x,c)\right. \\
	 & & \left.+\left(1-\alpha_S(x,h,y)\right)f^\dagger(0,c)-\kappa(y)\right\}\textrm{ for }0< x < h
   	\end{eqnarray*}};
    \node[on chain, decision] (test outer loop) {$h=1$?};
    \begin{scope}[start branch=test outer loop going left, node distance=0.5cm and 2.2cm]
    \node[on chain, block] (outer loop decrement) {$h\leftarrow h-1$};
    \end{scope}
    \node [on chain, io] (io) {Return $w_S$, $w_E$, $\alpha_E$, and $\alpha_S(l,h,y)$ for all $(l,h,y)\in\mathbb{K}^2\times\mathcal{Y}$ with $l<h$.};
\begin{scope}[start branch=io going right, node distance= 3cm and 1cm]
\node[on chain, information text]{The text defines $T_{lh}$ and $T_h$. Here, ``$\leftarrow$'' indicates assignment, $T^r$ denotes $T$ composed with itself $r$ times, and $\lim_{r\rightarrow\infty} T^r$ is calculated by iterating on $T$ until convergence up to some prespecified tolerance. Assignments hold for all values of their unbound arguments.};
\end{scope}
    \node [on chain, stop] (stop) {\textsc{STOP}};
     \begin{scope}[every path/.style=line]
      \path (start) -- (outer loop initialization);
      \path (outer loop initialization) -- (inner loop initialization);
      \path (inner loop initialization) -- (compute wL);
      \path (compute wL) -- (test survival rule calculation);
      \path (test survival rule calculation.east) -- node[pos=0.4,above] {Yes} (survival rule L);
      \path (test survival rule calculation.south) --  node[pos=0.4,left] (helper3) {No} (test inner loop.north);
      \path (survival rule L.south) |- (test inner loop.east);
      \path (test inner loop.south) -- node[pos=0.4,left] {Yes} (aE calculations);
      \path (aE calculations.south) -- (H calculations);
      \path (test inner loop.west) -- node[pos=0.4,above] {No} (inner loop decrement);
      \path (inner loop decrement.north) |- (compute wL.west);
      \path (H calculations) -- (test outer loop);
      \path (test outer loop.south) -- node[pos=0.4,left] {Yes} (io);
      \path (test outer loop.west) -- node[pos=0.4,above] {No} (outer loop decrement);
      \path (io) -- (stop);

      \coordinate (helper1) at ($(outer loop decrement.west)-(0.2cm, 0cm)$);
      \path (outer loop decrement.west) -- (helper1) |- (inner loop initialization.west);
     \end{scope}

\end{tikzpicture}

\end{center}
\vspace{-5mm}
\caption[FBI Stage of the Equilibrium Calculations]{FBI Stage of the Equilibrium Calculations\label{proc:fakeBackwardInduction}}
\medskip

\end{procedure}

On the $(h,l)$ iteration of the inner loop, the post-survival value function $c\mapsto w_S(l,h,c)$ of a type $l$ firm facing a type $h$ rival is computed as the unique fixed point of the operator $T_{lh}$ defined by
\begin{align*}
T_{lh}(f)(c)   \equiv 
&\beta\mathbb{E}\Bigg[\sum_{i=l}^{\check{k}}\sum_{j=h}^{\check{k}}  \Pi_{li}\Pi_{hj}\Bigg(\pi(i,j,Y^\prime) +\true\left\{(i,j)\neq(l,h)\right\} w_E (i,j,Y^\prime) \Bigg)\\
&+\Pi_{hh}\Pi_{ll}\max\left\{0,f(C^\prime)-\kappa(Y^\prime)\right\}\;\vline\;C=c\Bigg].
\end{align*}

After the inner loop is complete, the outer loop proceeds to calculate the post-survival value function $(x,c)\mapsto w_S(h,x,c)$ on $\{0,\ldots,h-1\}\times\mathcal{C}$ for a type $h$ firm facing either no rival ($x=0$) or a rival with strictly inferior type ($0<x<h$) as the unique fixed point of the operator $T_{h}$ defined as follows.
For $h=1$, set
\[
\begin{split}
T_1(f)(0,c)\equiv\beta\mathbb{E}\Bigg[
&\sum_{i=1}^{\check{k}}\Pi_{1i}\pi(i,0,Y^\prime)\\
&+\sum_{i=2}^{\check{k}}\Pi_{1i}\Big(\alpha_E(i,Y^\prime)w_E(i,1,Y^\prime)
+\left(1-\alpha_E(i,Y^\prime)\right)w_E(i,0,Y^\prime)\Big)\\
&+\Pi_{11}\Big(\alpha_E(1,Y^\prime)w_E(1,1,Y^\prime)\\
&\qquad+\left(1-\alpha_E(1,Y^\prime)\right)
\max\left\{0,f(0,C^\prime)-\kappa(Y^\prime)\right\}\Big)
\;\vline\;C=c\Bigg].
\end{split}
\]
For $h>1$, set
\[
\begin{split}
T_h(f)(0,c)\equiv\beta\mathbb{E}\Bigg[
&\sum_{i=h}^{\check{k}}\Pi_{hi}\pi(i,0,Y^\prime)\\
&+\sum_{i=h+1}^{\check{k}}\Pi_{hi}\Big(\alpha_E(i,Y^\prime)w_E(i,1,Y^\prime)
+\left(1-\alpha_E(i,Y^\prime)\right)w_E(i,0,Y^\prime)\Big)\\
&+\Pi_{hh}\Bigg(\alpha_E(h,Y^\prime)\max\Big\{0,
\alpha_S(1,h,Y^\prime)f(1,C^\prime)\\
&\qquad+\left(1-\alpha_S(1,h,Y^\prime)\right)f(0,C^\prime)-\kappa(Y^\prime)\Big\}\\
&\qquad+\left(1-\alpha_E(h,Y^\prime)\right)
\max\left\{0,f(0,C^\prime)-\kappa(Y^\prime)\right\}\Bigg)
\;\vline\;C=c\Bigg],
\end{split}
\]
and 
\[
\begin{split}
T_{h}(f)&(x,c)\equiv \\
 \beta \mathbb{E} \Bigg[&\sum_{i=h}^{\check{k}}\sum_{j=x}^{\check{k}}\Pi_{hi}\Pi_{xj}\pi(i,j,Y^\prime) 
+\sum_{j=h}^{\check{k}}\Pi_{hh}\Pi_{xj}w_E(h,j,Y^\prime) +\sum_{i=h+1}^{\check{k}}\sum_{j=x}^{\check{k}}\Pi_{hi}\Pi_{xj}w_E(i,j,Y^\prime) \\ 
 + & \sum_{j=x}^{h-1}\Pi_{hh}\Pi_{xj} \max\left\{0,\alpha_S(j,h,Y^\prime)f(j,C^\prime) + \left(1-\alpha_S(j,h,Y^\prime)\right)f(0,C^\prime)-\kappa(Y^\prime)\right\}\;\vline\;C=c\Bigg].
\end{split}
\]
\noindent for $x\in\{1,\ldots,h-1\}$.  Both $T_{lh}$ and $T_h$ satisfy Blackwell's sufficient conditions with modulus $\beta$, so each has exactly one fixed point.

The algorithm's second stage calculates the survival rules for monopolists and for duopoly incumbents facing a rival with a \emph{weakly} lower technology type and the entry rule for a potential entrant facing an empty market. This mimics the final step of the case with $\check{k}=2$. As in that example, the survival rule is pure unless both firms have the same type, the payoff to duopoly continuation is negative, and the payoff to monopoly continuation is positive. By construction, the resulting candidate equilibrium probability of survival in this case lies in $[0,1)$. 

Verification that Algorithm \ref{alg:candidateEquilibrium}'s candidate strategy forms a natural Markov-perfect equilibrium requires us to determine that its corresponding values are those obtained when all firms follow the candidate, that no firm can benefit from a one-shot deviation from the candidate when all other firms follow it, and that the candidate never directs a firm to exit when a lower-type competitor survives with positive probability. Online Supplement S1 \citep*{duopolysupp:abbringetal} verifies that it satisfies these requirements in the proof of the following theorem.
\begin{theorem}[Algorithmic Functionality and Existence]
\label{th:existence}
Algorithm \ref{alg:candidateEquilibrium} computes an equilibrium and its associated continuation values. 
\end{theorem}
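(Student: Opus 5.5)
The plan is to verify the three requirements listed just before the theorem: (i) the candidate values $w_E,w_S$ are the values generated when all firms follow $(\alpha_E,\alpha_S)$; (ii) $(\alpha_E,\alpha_S)$ satisfies the one-shot deviation conditions \eqref{eq:aE}--\eqref{eq:aS}; and (iii) it satisfies Definition \ref{def:naturalEquilibrium}. By the one-shot deviation principle of \citet{vsmpidTheory18:abbringetal} quoted above, (i)--(iii) suffice. The argument generalizes the two-type construction of Section \ref{sec:twoTypes}, with the two loops of Procedure \ref{proc:fakeBackwardInduction} playing the role of Steps 1--5.

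For (i), I would show that $w_E$ and $w_S$ solve \eqref{eq:vE}--\eqref{eq:vS} when $(a_E,a_S)=(\alpha_E,\alpha_S)$. Since \eqref{eq:vS} composed with \eqref{eq:vE} is a $\beta$-contraction for a fixed strategy, its solution is unique, so this identifies $w$ as the strategy's values. To check the equations, I would go state by state in the order of the loops. For $l<h$, the naturality to be proved in (iii) gives $\alpha_S(h,l,y)=1$ whenever $\alpha_S(l,h,y)=1$. Hence the low firm's post-entry value is $\max\{0,w_S(l,h,c)-\kappa(y)\}$, which is exactly what $T_{lh}$ uses. For $l=h$, the mixing formula in Algorithm \ref{alg:candidateEquilibrium} yields post-entry value $\max\{0,w_S(h,h,c)-\kappa(y)\}$, but only if $w_S(h,h,c)\le w_S(h,0,c)$. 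For the high firm and for monopolists, $T_h$ reproduces \eqref{eq:vS} directly, because the rival's rules $\alpha_S(j,h,\cdot)$ and $\alpha_E(h,\cdot)$ enter explicitly and the high firm's own rule is the maximizer.

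For (ii), most rules are best responses by construction, as in Step 6. Two cases need separate arguments. The first is the symmetric mixing rule, which is a best response once the ordering $w_S(h,h,c)\le w_S(h,0,c)$ holds. The second is the low firm's rule $\alpha_S(l,h,y)=\true\{w_S(l,h,c)>\kappa(y)\}$, which must coincide with the true best response against $\alpha_S(h,l,y)$. Following the two-type argument, if $\alpha_S(h,l,y)=0$ then the algorithm's last bullet gives $w_S(h,0,c)\le\kappa(y)$, and I need $w_S(l,0,c)\le w_S(h,0,c)$ to conclude that exit is optimal. The entry rules for $h\ge1$ compare $w_E(1,h,y)$ with $\varphi(h,y)$, which is the correct value by (i).

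For (iii), I need to show $\alpha_S(l,h,y)=1\Rightarrow\alpha_S(h,l,y)=1$. It suffices that $w_S(h,l,c)\ge w_S(l,h,c)$, since then $w_S(l,h,c)>\kappa(y)$ implies $\alpha_S(h,l,y)=1$.

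Everything therefore reduces to a family of orderings of the computed values, generalizing Lemmas \ref{lem:twoTypesCalculatedHighValueOrder} and \ref{lem:twoTypesCalculatedMonopolyValueOrder}:
\[ w_S(h,0,c)\ge w_S(h,x,c)\ge w_S(h,h,c)\ge w_S(l,h,c) \quad\text{for } x<h, \]
and
\[ w_S(h,0,c)\ge w_S(l,0,c) \quad\text{for } l<h. \]
The main obstacle is proving these. I would use the standard technique of showing that the relevant operators map the closed set of functions satisfying the ordering into itself, so their fixed points inherit it. The induction would follow the loop order: at stage $h$, assume the orderings for all higher types, then use Assumptions \ref{ass:monotoneBoundedProfits}, \ref{ass:monotoneTechnology}, and \ref{ass:noRegress} to couple the type transitions and compare $T_{lh}$ with $T_h$ termwise.

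The monopoly comparison $w_S(l,0,c)\le w_S(h,0,c)$ is the most delicate. Entry decisions differ across incumbent types, so here I would invoke Assumption \ref{ass:entryBarriers}. Combined with the rival-monotonicity of values, it shows that a higher-type monopolist faces entry only on a subset of the states where a lower type does. From that inclusion I would obtain the comparison pathwise.
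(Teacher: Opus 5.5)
Your architecture is the one the paper uses. You verify value consistency, the one-shot deviation conditions and naturality, and you reduce all three to orderings of the computed values, proved by showing that $T_{lh}$ and $T_h$ map closed order sets into themselves in the loop order. The main text notes that Supplement S1 proves, in particular, that $w_S(k,x,c)$ weakly decreases in the rival's type $x$. The gap is in the induction itself. The family of orderings you display does not propagate, so ``assume the orderings for all higher types'' will not let you prove them at type $h$. The termwise comparisons need more in three places.

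First, to get $w_S(h,0,c)\ge w_S(h,x,c)\ge w_S(h,h,c)$, the operator $T_h$ must preserve $f(0,\cdot)\ge f(1,\cdot)\ge\cdots\ge f(h-1,\cdot)\ge w_S(h,h,\cdot)$. A rival next type $j<h$ enters through $\max\{0,\alpha_S(j,h,Y')f(j,C')+(1-\alpha_S(j,h,Y'))f(0,C')-\kappa(Y')\}$. If $\alpha_S(j,h,y)=1$ while $\alpha_S(j+1,h,y)=0$, this term jumps from $f(j)$ up to $f(0)$ as the rival improves. So you need $\alpha_S(j,h,y)$ nondecreasing in $j$, i.e.\ $w_S(j,h,c)\le w_S(j+1,h,c)$ among low types. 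Rival transitions to $j\ge h$ also need rival monotonicity above the diagonal, e.g.\ $w_S(h,h,c)\ge w_S(h,j,c)$ for $j>h$. Neither is in your family.

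Second, proving $w_S(l,h,c)\le w_S(h,h,c)$ through Assumption \ref{ass:monotoneTechnology} requires $\pi(i,j,y)+w_E(i,j,y)$ to be nondecreasing in the own next type $i\ge h$ for each $j\ge h$. That includes cross-diagonal comparisons such as $w_S(j,j,c)\le w_S(j+1,j,c)$.

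Third, in the monopoly comparison, the inclusion $\alpha_E(h,y)\le\alpha_E(l,y)$ needs $w_S(1,h,c)\le w_S(1,l,c)$ for $1\le l<h$, which is again outside your family. The inclusion alone also does not finish the argument. On states where the type-$l$ monopolist keeps its type and faces entry, you must bound $w_E(l,1,y)$ by $w_E(h,1,y)$. That requires $w_S(l,x,c)\le w_S(h,x,c)$ for $0\le x<l$, proved jointly through $T_l$; for $l=1$ this is $w_S(1,1,c)\le w_S(h,1,c)$. This is exactly the middle step of the Appendix proof of Lemma \ref{lem:twoTypesCalculatedMonopolyValueOrder}, $w_S(1,1,c)\le w_S(2,1,c)$, which your ``pathwise'' sketch skips.

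The repair is to carry full two-way monotonicity of every value computed so far, in loop order:
\begin{itemize}
\item[] $w_S(k,x,c)$ is nonincreasing in $x\in\{0\}\cup\mathbb{K}$ and nondecreasing in $k$, on both sides of the diagonal;
\item[] $\alpha_E(x,y)$ is nonincreasing in $x$ (this follows via Assumption \ref{ass:entryBarriers});
\item[] $\alpha_S(l,h,y)$ is nondecreasing in $l$.
\end{itemize}
Then check that each new $T_{lh}$ (with $l$ descending) and each $T_h$ preserves the corresponding order set, using stochastic dominance of own and rival transitions. With that hypothesis, your steps (i)--(iii) go through.

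One minor correction: you do not need $w_S(h,h,c)\le w_S(h,0,c)$ to show that the symmetric rule gives $w_E(h,h,y)=\max\{0,w_S(h,h,c)-\kappa(y)\}$ and is a best response. Its mixing case already forces $w_S(h,h,c)<w_S(h,0,c)$, and the other two cases are immediate.
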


The analysis of the case with $\check{k}=2$ and strictly positive entry costs demonstrated that, in any equilibrium, a firm's post-survival value as a monopolist weakly exceeds its post-survival value facing a rival of weakly higher type: $v_S(k,0,c)\geq v_S(k,x,c)$ for all $k,x\in\{1,2\}$ with $k\leq x$ and all $c\in\mathcal{C}$. These inequalities made the fixed-point equations used to calculate $w_S(2,2,c)$, $w_S(1,2,c)$, and $w_S(1,1,c)$ necessary conditions, so their solutions equaled the only possible equilibrium post-survival values for a duopolist facing a rival of weakly higher type. For arbitrary $\check{k}$, Online Supplement S1 establishes, in particular, that the candidate post-survival value $w_S(k,x,c)$ weakly decreases with the rival's type $x$. Building down recursively from the highest type, Online Supplement S2 uses this property, together with uniqueness already established for higher types, to obtain $v_S(k,0,c)\geq v_S(k,x,c)$ for every pair of types $k\leq x$ in any equilibrium. These inequalities in turn make the relevant fixed-point equations necessary and yield

\begin{theorem}[Essential Uniqueness]
\label{th:uniqueness}
Suppose that $\varphi(x,y)>0$ for all $(x,y)\in(\{0\}\cup\mathbb{K})\times\mathcal{Y}$. Then, any equilibrium $(a_E,a_S)$ and its continuation values $(v_E,v_S)$ are almost surely equal to the candidate strategy $(\alpha_E,\alpha_S)$ and continuation values $(w_E,w_S)$ computed by Algorithm \ref{alg:candidateEquilibrium}. In particular, $v_S(k,x,c)=w_S(k,x,c)$ for all $(k,x,c)\in\mathbb{K}\times(\{0\}\cup\mathbb{K})\times\mathcal{C}$. Moreover, for all $(k,x,c)\in\mathbb{K}\times(\{0\}\cup\mathbb{K})\times\mathcal{C}$, $v_E(k,x,(c,Z))=w_E(k,x,(c,Z))$ and $a_S(k,x,(c,Z))=\alpha_S(k,x,(c,Z))$ with probability one; and, for all $(x,c)\in(\{0\}\cup\mathbb{K})\times\mathcal{C}$, $a_E(x,(c,Z))=\alpha_E(x,(c,Z))$ with probability one.
\end{theorem}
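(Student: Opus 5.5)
The proof generalizes the two-type argument by strong induction along the order in which Procedure \ref{proc:fakeBackwardInduction} visits states. The outer index $h$ runs down from $\check{k}$ to $1$, and for each $h$ the inner index $l$ runs down from $h$ to $1$.

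Fix an arbitrary equilibrium $(a_E,a_S)$ with values $(v_E,v_S)$. The inductive hypothesis at stage $(l,h)$ has four parts:
\begin{itemize}
\item $v_S(i,j,c)=w_S(i,j,c)$ for every pair with $\min\{i,j\}$ above the current point in the order (that is, $\max\{i,j\}>h$, or $\max\{i,j\}=h$ and $\min\{i,j\}>l$);
\item $v_E(i,j,(c,Z))=w_E(i,j,(c,Z))$ almost surely for every such pair;
\item $a_E(j,(c,Z))=\alpha_E(j,(c,Z))$ almost surely for $j>h$;
\item $a_S(i,j,(c,Z))=\alpha_S(i,j,(c,Z))$ almost surely for such pairs with $i<j$.
\end{itemize}
By Assumption \ref{ass:noRegress}, the right-hand side of \eqref{eq:vS} at $(l,h)$ involves $v_E(l,h,\cdot)$ together with post-entry values at pairs $(i,j)\geq(l,h)$ componentwise with $(i,j)\neq(l,h)$, all of which are already covered. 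So it suffices to show two things. First, $v_E(l,h,y)=\max\{0,v_S(l,h,c)-\kappa(y)\}$. Then $v_S(l,h,\cdot)$ is a fixed point of $T_{lh}$, where replacing the covered $v_E$'s by $w_E$'s changes nothing because they agree almost surely under the conditional expectation. Blackwell's conditions then give $v_S(l,h,\cdot)=w_S(l,h,\cdot)$.

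The key step is establishing $v_E(l,h,y)=\max\{0,v_S(l,h,c)-\kappa(y)\}$. For $l<h$, I mirror Lemma \ref{lem:twoTypesLowSurvivalRule}. By naturality, whenever the low firm survives with positive probability the high firm survives for sure. Suppose instead $a_S(h,l,y)<1$ while $v_S(l,h,c)\le\kappa(y)<v_S(l,0,c)$. Then the low firm must exit, so the high firm faces $v_S(h,0,c)-\kappa(y)$ and would survive strictly unless $v_S(h,0,c)\le\kappa(y)$. That requires $v_S(l,0,c)\le v_S(h,0,c)$ in any equilibrium, which is the analogue of Lemma \ref{lem:twoTypesCalculatedMonopolyValueOrder}.

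For $l=h$, I need $v_S(h,0,c)\ge v_S(h,h,c)$, the analogue of Lemmas \ref{lem:twoTypesHighMonopolyBeatsDuopoly} and \ref{lem:twoTypesLowMonopolyBeatsDuopoly}. With it, the symmetric mixed equilibrium of the survival game yields payoff zero in the indifference region.

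These monopoly-versus-duopoly inequalities are the main obstacle, because $v_S(h,0,\cdot)$ has not yet been pinned down at stage $(h,h)$. My plan is as follows.
\begin{itemize}
\item Write $v_S(h,0,\cdot)$ as a fixed point of the equilibrium analogue of $T_h$, in which equilibrium entry and survival rules at lower pairs appear.
\item Compare it with $v_S(h,x,\cdot)$ through a pathwise coupling argument: a monopolist's payoff stream dominates that of a firm saddled with a rival. This uses Assumption \ref{ass:monotoneBoundedProfits}.\ref{ass:monotoneBoundedProfits:rivalType} together with the fact that a rival's exit weakly helps, since $\max\{0,\cdot\}$ is monotone.
\item Handle entry through Assumption \ref{ass:entryBarriers}, which keeps entry against the monopolist from being more attractive than it would be with the rival present.
\item Use the already-established monotonicity of the candidate $w_S(k,x,c)$ in $x$ from Online Supplement S1, combined with equality with $w_S$ at higher types. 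This controls the continuation terms along coupled paths until the rival's type exceeds $h$, and from there on everything is already pinned down.
\end{itemize}

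Once the inner loop for $h$ is done, I show $(v_S(h,x,\cdot))_{x<h}$ is a fixed point of $T_h$. This requires $a_E(h,\cdot)=\alpha_E(h,\cdot)$ almost surely. That follows because the entrant's value $v_E(1,h,y)=w_E(1,h,y)$ is known and $\varphi>0$: indifference then forces $w_S(1,h,c)-\kappa(y)=\varphi(h,y)$, a null event by Assumption \ref{ass:continuousShocks}. It also requires $a_S(x,h,\cdot)=\alpha_S(x,h,\cdot)$ almost surely, which holds by the same null-event argument.

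Finally, the remaining rules must agree almost surely. These are the monopoly survival rules, $a_S(h,l,\cdot)$ for $l<h$, entry into an empty market, and the symmetric mixing $a_S(h,h,\cdot)$. They are best responses to identical payoffs, so they can differ only on indifference events, which are null except in the symmetric case. In the symmetric case the unique symmetric mixed equilibrium of the reduced-form game forces the formula in Algorithm \ref{alg:candidateEquilibrium}.
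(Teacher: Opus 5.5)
Your architecture matches the paper's: you run an induction in the order Procedure~\ref{proc:fakeBackwardInduction} visits states, identify equilibrium values as fixed points of $T_{lh}$ and $T_h$ once covered post-entry values and rules are replaced by their candidate counterparts, and finish with null-event arguments for the rules. The gap is in the step you call key for $l<h$. If $a_S(h,l,y)=1$ there is nothing to show. If $a_S(h,l,y)<1$, naturality forces $a_S(l,h,y)=0$ and hence $v_E(l,h,y)=0$, so what you must exclude is $v_S(l,h,c)>\kappa(y)$ in that configuration. The case you examine, $v_S(l,h,c)\le\kappa(y)$, needs no argument. The inequality you invoke, $v_S(l,0,c)\le v_S(h,0,c)$, cannot exclude the bad case. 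Take $a_S(h,l,y)=a_S(l,h,y)=0$ with $v_S(l,0,c)\le v_S(h,0,c)\le\kappa(y)<v_S(l,h,c)$. This satisfies both firms' optimality conditions, naturality, and your inequality, yet gives $v_E(l,h,y)=0<v_S(l,h,c)-\kappa(y)$. It is also the wrong import: Lemma~\ref{lem:twoTypesCalculatedMonopolyValueOrder} orders \emph{candidate} values by operator comparison and serves only the existence argument. At stage $(l,h)$, neither $v_S(l,0,\cdot)$ nor $v_S(h,0,\cdot)$ has been identified. The same hole undermines your claim that $a_S(l,h,\cdot)=\alpha_S(l,h,\cdot)$ almost surely ``by the same null-event argument.'' That claim needs the equivalence of Lemma~\ref{lem:twoTypesLowSurvivalRule}, whose nontrivial direction is exactly this case.

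What you need is $v_S(l,0,c)\ge v_S(l,h,c)$ for every $l\le h$, not just $l=h$. This is the family of inequalities the paper obtains in Online Supplement S2. With it, the low firm's payoff from survival is at least $v_S(l,h,c)$ whatever $a_S(h,l,y)$ is, so its exit forces $v_S(l,h,c)\le\kappa(y)$. A single deviation argument covers both $l<h$ and $l=h$. The type-$l$ monopolist survives until $\tau$, the first period in which the duopolist might exit or its own type exceeds $h$. Before $\tau$ the duopolist's type is at most $h$ and its rival's is at least $h$, so the rival survives for sure by naturality or symmetry. A monotone coupling then keeps any rival of the monopolist (an entrant, arriving with type $1$) weakly below the duopolist's rival in type. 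At $\tau$, either both continuation payoffs are zero, or the firm's own type exceeds $h$. In the latter case both firms' states are already identified with candidate values, whose monotonicity in the rival's type (Online Supplement S1) finishes the comparison as in the proof of Lemma~\ref{lem:twoTypesLowMonopolyBeatsDuopoly}. Your sketch for $l=h$ keys the stopping rule to the wrong firm. When the \emph{rival's} type exceeds $h$, the duopolist's state is covered, but the monopolist's state $(h,\widehat{X})$ with $\widehat{X}\le h$ is not, so ``from there on everything is already pinned down'' fails. Nor does writing $v_S(h,0,\cdot)$ as a fixed point of the equilibrium analogue of $T_h$ help, since that operator contains the not-yet-identified rules $a_S(x,h,\cdot)$ and $a_E(h,\cdot)$.
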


\noindent Given any joint distribution of the exogenous state $Y_{t_0}$ and market structure (the number of active firms and their technology types) at some time $t_0$ such that Assumption \ref{ass:continuousShocks} holds, Theorem \ref{th:uniqueness} implies that all equilibria induce the same law for the exogenous-state and market-structure process from $t_0$ onward.

For completeness, both Theorems \ref{th:existence} and \ref{th:uniqueness} include the homogeneous-firms case $\check{k}=1$. \citet*{vsmpidTheory18:abbringetal} show that our results extend beyond duopoly to oligopoly with an arbitrary number of firms if $\check{k}=1$. The type of heterogeneity we allow for here enables richer economic mechanisms, such as high-type firms deterring entry by low-type firms. This changes the analysis substantially. Indeed, Theorem \ref{th:uniqueness} does {\em not} directly extend to oligopoly with an arbitrary number of firms if $\check{k}>1$ (see Footnote \ref{fn:duopolyEssential}).

Like Section \ref{sec:twoTypes}'s two-type example, Theorem \ref{th:uniqueness} requires that entry costs are strictly positive. This ensures that a potential entrant strictly prefers staying out whenever its post-entry value is zero. If entry is costless instead, a potential entrant with a zero post-entry value would be indifferent between entering and staying out. Theorem \ref{th:existence}, which allows zero entry costs, still gives an equilibrium. However, there may now be alternative equilibria that differ from Theorem \ref{th:existence}'s equilibrium with positive probability.

First, there may be equilibria in which firms enter in states in which they subsequently exit for sure. As these states may have positive probability, the entry rules in such equilibria may not equal those from Theorem \ref{th:existence} with probability one. Nevertheless, because entry followed by sure exit is equivalent to no entry from both the potential entrant's and the incumbent's perspective, this does not affect continuation values and leads to the same post-survival outcomes.

Second, there may be equilibria in which firms enter in states in which the post-entry value is zero because they subsequently mix nontrivially between survival and exit (as entrants get type $1$ and equilibrium mixing only occurs in same-type duopolies, this only happens with entry into a type $1$ monopoly). Because the entrant then survives with positive probability, its entry changes the incumbent's continuation value. As states in which firms mix may have positive probability, equilibrium strategies and continuation values may not equal Theorem \ref{th:existence}'s with probability one.

There is nonetheless good reason to adopt Theorem \ref{th:existence}'s equilibrium even if entry costs are zero. We conjecture that this equilibrium is the limit of the essentially unique (by Theorem \ref{th:uniqueness}) equilibria at strictly positive entry costs as those costs shrink to zero. Intuitively, continuity of the fixed-point calculations and smoothing by the payoff shocks $Z$ suggest that the resulting equilibrium values and strategies converge to those calculated by Algorithm \ref{alg:candidateEquilibrium} at zero entry costs. Therefore, in the next section's application, we use Theorem \ref{th:existence}'s equilibrium for all specifications, including those with zero entry costs.


\section{A Quantitative Analysis of Competition Policy}
\label{sec:competitionPolicy}

In this penultimate section, we exemplify the duopoly game's application with a quantitative analysis of competition policy in a new market. Although the parameter values we use have empirical interpretations, our results characterize no particular real-world market.  We calculate Theorem \ref{th:existence}'s equilibrium outcome, which we call the outcome of \emph{laissez faire} competition policy. We compare it with the choices of a social planner who controls the entry, exit, and conduct of two identically endowed firms.  We finish by characterizing competition policies that narrow the gap between the planner's expected surplus and that realized under laissez faire. 

\subsection{The Example's Specification}

The unit of time equals one month, and we set the discount factor $\beta$ to $\parameter{\grossAnnualFactor}^{-1/12} \approx \parameter{\discountFactor}$. This matches an annual cost of capital equal to \counted{\annualCostOfCapital} percent. The market begins with one customer. At first, the market size (which we measure in customers) is \emph{immature}. In this state, the market grows by \counted{\growthPercent} percent with probability \probability{\growthProbability}. Each month, the immature phase ends with probability $1/\counted{\developmentMonths}$, so that it lasts $\counted{\developmentMonths}$ months on average. These two draws are independent of each other. The persistent state $C_t$ is therefore the pair $(S_t,D_t)$, with $S_t$ the market's size and the Boolean $D_t$ indicating that the market is immature. Market size stops growing either when the market reaches maturity or when it reaches the maximum possible market size by making \counted{\the\numexpr\marketSizes-1\relax} steps. Both components take finitely many values, so Assumption \ref{ass:finitePersistentState} holds good. Customers enjoy both the good in question -- widgets -- and residual income. Their utility is quadratic and concave in widgets, linear in residual income, and additively separable between the two goods. Therefore, each customer's inverse demand curve is $P=\bar{p}-b q$ with both $\bar{p}$ and $b$ positive. The constant $\bar{p}$ is the customer's choke price. All customers are identical, so the aggregate inverse demand curve is $P=\bar{p}-(b/S)Q$, where $Q$ is industry output and $S$ is market size.

Firms produce identical widgets, and quadratic cost functions represent their technologies. A firm of type $k\in\mathbb{K}$ has costs $c_k(q)=\mu_k q + \frac{\omega}{2} q^2$ with $\mu_1\geq \mu_2\geq \ldots \geq \mu_{\check{k}}$. The curvature $\omega$ is positive, so a firm's marginal cost rises with its own output. Its output and its profit therefore approach finite limits as the market grows, rather than rising in proportion with it. For our experiments, we normalize the choke price $\bar{p}$ to equal \counted{\demandIntercept} and set $\mu_1=\bar{p}$, so that a new firm has marginal cost of its first unit equal to the choke price. For each remaining type $k>1$, $\log\mu_k=\log\mu_{k-1}-\chosen{\costDecline}$. A firm advances one type with probability $\probability{\advanceProbability}$ per month. Otherwise it stays in place, so our example satisfies Assumptions \ref{ass:monotoneTechnology} and \ref{ass:noRegress}. We set $\check{k}=\counted{\technologyTypes}$, so a firm operating the best technology has marginal cost of its first unit equal to \parameter{\frontierMarginalCost}. That is, our example embodies substantial learning-by-aging: a firm's technology improves with the time it has spent in the market rather than with its cumulative output. Cournot competition gives firms' profits. These weakly increase with a firm's own technology type and weakly decrease with its rival's. Therefore, this example satisfies Assumptions \ref{ass:monotoneBoundedProfits}.\ref{ass:monotoneBoundedProfits:ownType} and \ref{ass:monotoneBoundedProfits}.\ref{ass:monotoneBoundedProfits:rivalType}. Since market size is bounded from above, it also satisfies Assumption \ref{ass:monotoneBoundedProfits}.\ref{ass:monotoneBoundedProfits:bounded}. 

Our baseline specification sets the sunk costs of entry to zero, so it satisfies Assumption \ref{ass:entryBarriers}. We set the process $\{Z_t\}$ equal to an \emph{i.i.d.} sequence of standard uniform random variables that is independent of $\{C_t\}$, satisfying Assumption \ref{ass:conditionalIndependence}. We set $\kappa(c,z)=\underline{\kappa}\left(1-z\right)^{-1/\gamma}$, which does not vary with $c$. This functional form gives $\kappa(c,Z)$ a Pareto distribution with minimum value $\underline{\kappa}$ and, when $\gamma>1$, mean $\underline{\kappa}\gamma/(\gamma-1)$. It also gives a firm's exit probability a constant elasticity with respect to its post-survival value. A monopolist with post-survival value $w\geq\underline{\kappa}$ exits with probability $\left(\underline{\kappa}/w\right)^{\gamma}$, so one percent more value brings $\gamma$ percent less exit risk. We set $\gamma=\counted{\exitElasticity}$. So that the median draw of the continuation cost equals one, we set $\underline{\kappa}=2^{-1/\gamma}$. That median is the unit in which we report every surplus below. Because $Z$ has a continuous distribution and $\kappa(c,z)$ strictly increases with $z$, $\kappa(c,Z)$ has no atom for every $c$; and because $\varphi$ equals zero, neither has $\kappa(c,Z)+\varphi(x,(c,Z))$. Assumption \ref{ass:continuousShocks} holds good.

Given these choices, we set the values of $b$ and $\omega$ to match two arbitrary-but-interpretable targets on the choices of a monopolist operating the best-possible technology at the median mature market size. Its Lerner index, the margin of price over marginal cost as a share of price, equals \chosen{\lernerTarget}; and its discounted monthly profit exactly equals the median draw of the continuation cost. Table \ref{tab:exampleTwoParameters} lists the values of these and all other parameters.

\begin{table}[tp]\centering
\begin{tabular}{llr}
\hline
Symbol & Description & Value \\
\hline
--- & Annual Cost of Capital, Percent & \counted{\annualCostOfCapital} \\
$\beta$ & Monthly Discount Factor & \parameter{\discountFactor} \\
--- & Expected Months of Demand Immaturity & \counted{\developmentMonths} \\[4pt]
$\bar{p}$ & Choke Price & \counted{\demandIntercept} \\
--- & Growth Factor of the Number of Customers & \chosen{\growthStep} \\
--- & Probability that the Number of Customers Grows & \probability{\growthProbability} \\
--- & Points on the Market Size Grid & \counted{\marketSizes} \\[4pt]
$\check{k}$ & Technology Types & \counted{\technologyTypes} \\
$\log\mu_{k-1}-\log\mu_k$ & Log Cost Gap between Types & \chosen{\costDecline} \\
--- & Probability of Advancing a Type & \probability{\advanceProbability} \\
$\mu_1$ & Entrant's Marginal Cost & \counted{\entrantMarginalCost} \\
$\mu_{\check{k}}$ & Frontier Marginal Cost & \parameter{\frontierMarginalCost} \\[4pt]
$\gamma$ & Pareto Index of the Continuation Cost's Distribution & \counted{\exitElasticity} \\
$\underline{\kappa}$ & Lowest Continuation Cost & \parameter{\lowestContinuationCost} \\
--- & Median Continuation Cost & \counted{\medianContinuationCost} \\[4pt]
$b$ & Demand Slope & \parameter{\demandSlope} \\
$\omega$ & Cost Curvature & \parameter{\costCurvature} \\[4pt]
--- & Convergence Tolerance & \powerOfTwo{\valueToleranceExponent} \\
--- & Months Simulated per Market & \counted{\simulationMonths} \\
\hline
\end{tabular}

\caption{The Example's Parameter Values}
\label{tab:exampleTwoParameters}
\end{table}

\subsection{Social Planning versus Laissez Faire}

The choices of a social planner who controls entry, exit, and production decisions of all firms give our point of reference for competition policy. This planner balances the benefits of entering new markets early, so that firms can exploit learning-by-aging during the market's immaturity, against the costs of over-investing in markets that stop growing while still too small to justify ongoing production.

The planner begins each period with zero, one, or two firms, of types $h$ and $l$ with $h\geq l$. A superscript $\varsigma$ (for $\varsigma$ocial) denotes the planner's counterpart of an object in the duopoly game. The planner first makes every static choice and generates the surplus $\pi^\varsigma(h,l,c)$. For the current example, $\pi^\varsigma(h,l,c)$ is total money-metric utility less the cost of producing it, with output allocated across the active firms so as to equate marginal costs among those producing positive output. After receiving the surplus, the planner decides on entry into each empty slot, and on continuation for each active firm. Entry costs the planner $\varphi^\varsigma(y)$, and continuation costs her $\kappa^\varsigma(y)$. These satisfy $\varphi^\varsigma(y)\leq\varphi(x,y)$ for every $x$ and $\kappa^\varsigma(y)\leq\kappa(y)$. Allowing firms' costs and the planner's costs to differ lets us account for entry and continuation taxes. In our baseline experiments, the planner's entry and continuation costs take the calibrated values in Table \ref{tab:exampleTwoParameters}. However, we experiment with entry fees which drive $\varphi(x,y)>\varphi^\varsigma(y)$.   The period ends with the stochastic evolution of any active firms' types and of the demand state. The planner discounts future surpluses with $\beta$, the discount factor the firms use. Specifying this problem as a dynamic program and solving it with Bellman equation iteration are both standard exercises.

Given the planner's solution, we calculate the expected surplus generated in a new market with no incumbent firms. This and every other expected surplus we report comes from value functions computed to the tolerance in Table \ref{tab:exampleTwoParameters}. Since our numeraire is a flow (the median continuation cost), we report the constant monthly payment with the same present value. This equals \surplus{\plannerSurplus}. The analogous value for the duopoly model's equilibrium equals only \surplus{\laissezFaireSurplus}. The difference between these two expected surpluses equals the potential for competition policy to improve outcomes. 

%
%

\begin{figure}[tp]
\centering
\begin{tikzpicture}
\begin{axis}[
  exampleTwoAxis,
  xmode=log,
  height=0.38\textheight,
  xmin=1, xmax=\largestSizeDrawn,
  xtick={1,\matureSizeMedian, \matureSizeSeventyFifth,
    \matureSizeNinetieth, \matureSizeNinetyFifth,
    \matureSizeNinetyEighth, \matureSizeNinetyNinth, \largestSizeDrawn},
  xticklabels={,\marketsize{\matureSizeMedian},
    \marketsize{\matureSizeSeventyFifth},
    \marketsize{\matureSizeNinetieth},
    \marketsize{\matureSizeNinetyFifth},
    \marketsize{\matureSizeNinetyEighth},
    \marketsize{\matureSizeNinetyNinth}, },
  minor xtick={},
  ymin=-0.012, ymax=0.18,
  ytick={0, 0.04, 0.08, 0.12, 0.16},
  xlabel={Market Size (Log Scale)},
  ylabel={Surplus},
  legend pos=north west,
]

\addplot[plannerLine] table[x=size, y=planner]
  {compute/acy1a/accumulatedSurplus.dat};
\addlegendentry{Planner}
\addplot[marketLine] table[x=size, y=freeEntry]
  {compute/acy1a/accumulatedSurplus.dat};
\addlegendentry{Laissez Faire}

\addplot[plannerLine]
  coordinates {(26, \plannerSurplus) (30, \plannerSurplus)};
\addplot[marketLine]
  coordinates {(26, \laissezFaireSurplus) (30, \laissezFaireSurplus)};

%
\node[anchor=east, font=\small, inner sep=2pt] (plannerTotal)
  at (axis cs:23, 0.168) {\surplus{\plannerSurplus}};
\draw[->, shorten >=2pt] (plannerTotal.south east) -- (axis cs:26.6, 0.1525);
\node[anchor=east, font=\small, inner sep=2pt] (marketTotal)
  at (axis cs:23, 0.098) {\surplus{\laissezFaireSurplus}};
\draw[->, shorten >=2pt] (marketTotal.south east) -- (axis cs:26.6, 0.0825);

\end{axis}
\end{tikzpicture}
\caption{Accumulated Surplus}
\label{fig:exampleTwoAccumulatedSurplus}

\vspace*{3pt}
\figurenote{\noindent Note: Each point on a curve gives the part of a new market's expected surplus that comes from markets maturing at or below the $x$-axis value. Markets that mature at a given size add their expected surplus times the probability that a market matures at that size. The two short lines to the right mark the total surpluses for the planner and laissez faire solutions. The labeled horizontal tick marks report the 50th, 75th, 90th, 95th, 98th, and 99th percentiles of market size at maturity. The two unlabeled tick marks are at $1$, the initial and smallest size, and $30$. The $x$-axis does \emph{not} cover the entire support for mature market size.  The average surpluses by mature market size equal conditional averages of realized surplus from \counted{\marketsSimulated} randomly sampled markets simulated for \counted{\simulationMonths} months each.

Months after the last one simulated are discounted by at least $\beta^{\simulationMonths}=\scientific{\discountAtHorizon}$. A month's surplus can be negative, because firms pay continuation costs. No firm pays more to continue than continuing is worth, and nothing is worth more than two frontier firms serving the largest market forever. Cutting each simulation at \counted{\simulationMonths} months therefore moves no reported surplus by more than \scientific{\truncationBound}. The standard error of every plotted point is at most \parameter{\largestAccumulatedStandardError}.}

\end{figure}

Before considering regulatory responses to this market failure, we wish to identify where and how the planner improves on laissez faire policy. We partially answer ``where'' by calculating the contributions of expected surplus made by markets with different mature sizes. Figure \ref{fig:exampleTwoAccumulatedSurplus} presents the results. Market size at maturity runs along its horizontal axis, while its vertical axis gives surplus. The two curves give cumulative expected surplus for the planner and for laissez faire. Each point on a curve gives the expected surplus received from markets with mature market size at or below the given $x$-axis value. A curve decreases whenever markets have negative expected surplus. We limit the $x$-axis to mature market sizes between $1$ (the initial and smallest value) and $\counted{\largestSizeDrawn}$, which exceeds the 99th percentile of mature market size. Limiting the $x$-axis range by omitting the very largest markets makes the figure's relevant content more visible. Tick marks indicate the position of the mature market size's median ($\marketsize{\matureSizeMedian}$) and its 75th, 90th, 95th, 98th, and 99th percentiles ($\marketsize{\matureSizeSeventyFifth}$, $\marketsize{\matureSizeNinetieth}$, $\marketsize{\matureSizeNinetyFifth}$, $\marketsize{\matureSizeNinetyEighth}$, and $\marketsize{\matureSizeNinetyNinth}$). If the $x$-axis covered the entire support of the mature market size, the curves' rightmost values would equal the surpluses reported above. Short lines on the right-hand side mark these values. 

For both the planner and laissez faire, the majority of markets generate no positive surplus. Both curves fall below zero before they rise above it. The planner's curve falls first, over markets that grew too little to pay the continuation costs of the firms she created. Laissez faire's curve falls over somewhat larger markets. Above the \parameter{\gapOpensPercentile}th percentile of mature market size ($\marketsize{\gapOpensSize}$), the planner's curve exceeds laissez faire's. At the 99th percentile of mature market size, the gap between them has nearly reached \surplus{\surplusGap}, the difference between the two total expected surpluses. Markets above that size therefore account for all of the planner's net gain over laissez faire.

%
%

\begin{figure}[tp]
\centering
\begin{tikzpicture}
\begin{groupplot}[
  exampleTwoAxis,
  xmode=log,
  group style={
    group size=1 by 2,
    vertical sep=1.6cm,
    x descriptions at=edge bottom,
  },
  xmin=1, xmax=\largestSizeDrawn,
  xtick={1, \matureSizeMedian, \matureSizeSeventyFifth,
    \matureSizeNinetieth, \matureSizeNinetyFifth,
    \matureSizeNinetyEighth, \matureSizeNinetyNinth, \largestSizeDrawn},
  xticklabels={, \marketsize{\matureSizeMedian},
    \marketsize{\matureSizeSeventyFifth},
    \marketsize{\matureSizeNinetieth},
    \marketsize{\matureSizeNinetyFifth},
    \marketsize{\matureSizeNinetyEighth},
    \marketsize{\matureSizeNinetyNinth}, },
  minor xtick={},
  ymin=0, ymax=1.04,
  ytick={0, 0.2, 0.4, 0.6, 0.8, 1},
  ylabel={Frequency},
]

\nextgroupplot[title={One Firm}]
\addplot[plannerLine] table[x=size, y=onePlannerOnArrival]
  {compute/acy1a/structureOnArrival.dat};
\addlegendentry{Planner, on arrival}
\addplot[plannerSettled] table[x=size, y=onePlannerAfterMaturity]
  {compute/acy1a/firmCountsByMatureSize.dat};
\addlegendentry{Planner, $t_M+36$}
\addplot[marketLine] table[x=size, y=oneFreeEntryOnArrival]
  {compute/acy1a/structureOnArrival.dat};
\addlegendentry{Laissez Faire, on arrival}
\addplot[marketSettled] table[x=size, y=oneFreeEntryAfterMaturity]
  {compute/acy1a/firmCountsByMatureSize.dat};
\addlegendentry{Laissez Faire, $t_M+36$}

\nextgroupplot[title={Two Firms}, xlabel={Market Size (Log Scale)}]
\addplot[plannerLine] table[x=size, y=twoPlannerOnArrival]
  {compute/acy1a/structureOnArrival.dat};
\addplot[plannerSettled] table[x=size, y=twoPlannerAfterMaturity]
  {compute/acy1a/firmCountsByMatureSize.dat};
\addplot[marketLine] table[x=size, y=twoFreeEntryOnArrival]
  {compute/acy1a/structureOnArrival.dat};
\addplot[marketSettled] table[x=size, y=twoFreeEntryAfterMaturity]
  {compute/acy1a/firmCountsByMatureSize.dat};

\end{groupplot}
\end{tikzpicture}
\caption{Market Structure on Arrival and 36 Months after Maturation}
\label{fig:exampleTwoFirmCounts}

\vspace*{3pt}
\figurenote{\noindent Note: The top panel reports frequencies of a market having one active firm by market size. The two solid lines' frequencies are tabulated as of the moment the market reaches the given market size (on arrival), while the dashed lines' frequencies are tabulated 36 months after the market matured at that size ($t_M+36$). The bottom panel reports the analogous frequencies of a market having two active firms. The horizontal axis tick marks are identical to those in Figure \ref{fig:exampleTwoAccumulatedSurplus}. The standard error of every plotted point is at most \parameter{\largestFirmCountStandardError}.}

\end{figure}

Figure \ref{fig:exampleTwoFirmCounts} partially answers ``how'' by plotting how often a market has one (top panel) or two (bottom panel) active firms, by market size. Recall that each market's size can only increase. The gray solid lines give the frequencies for the planner when the market size reaches the given value on the horizontal axis. The gray dashed lines give the analogous frequencies by market size \counted{\monthsAfterMaturity} months after the market matured at that size. Whether a market matures at a size is independent of its history, so the markets that reach a size and those that mature there hold the same distribution of firms on arrival. The solid and dashed lines therefore describe the same markets, in distribution, at two moments. The solid-black and dashed-black lines give the same frequencies for the laissez faire solution. 

Inspection of this figure leaves us with three lessons. First, the planner's first firm enters well before the first firm under laissez faire. Second, the planner requires much larger market sizes to induce a duopolist's entry than does laissez faire. Third, markets often experience a ``shakeout'' after demand reaches maturity. The planner's shakeout is much smaller than that under laissez faire. Although the planner takes on more demand risk, fewer of her firms exit after demand matures.

%
%
%

\begin{figure}[tp]
\centering
\begin{tikzpicture}
\begin{axis}[
  exampleTwoAxis,
  xmode=log,
  height=0.32\textheight,
  xmin=1, xmax=\largestSizeDrawn,
  xtick={1, \matureSizeMedian, \matureSizeSeventyFifth,
    \matureSizeNinetieth, \matureSizeNinetyFifth,
    \matureSizeNinetyEighth, \matureSizeNinetyNinth, \largestSizeDrawn},
  xticklabels={, \marketsize{\matureSizeMedian},
    \marketsize{\matureSizeSeventyFifth},
    \marketsize{\matureSizeNinetieth},
    \marketsize{\matureSizeNinetyFifth},
    \marketsize{\matureSizeNinetyEighth},
    \marketsize{\matureSizeNinetyNinth}, },
  minor xtick={},
  ymin=0.28, ymax=1.02,
  ytick={0.4, 0.6, 0.8, 1.0},
  xlabel={Market Size (Log Scale)},
  ylabel={Marginal Cost},
  legend pos=north east,
]

\addplot[plannerLine] table[x=size, y=bestMarginalCostPlannerOnArrival]
  {compute/acy1a/structureOnArrival.dat};
\addlegendentry{Planner}
\addplot[marketLine] table[x=size, y=bestMarginalCostFreeEntryOnArrival]
  {compute/acy1a/structureOnArrival.dat};
\addlegendentry{Laissez Faire}

\addplot[plannerGray, line width=0.4pt, mark=none, forget plot]
  coordinates {(1, \frontierMarginalCost) (\largestSizeDrawn, \frontierMarginalCost)};

\end{axis}
\end{tikzpicture}
\caption{The Leading Firm's Marginal Cost on Arrival}
\label{fig:exampleTwoLeaderCost}

\vspace*{3pt}
\figurenote{\noindent Note: Each curve reports the average marginal cost at zero
output of the most productive active firm in the market, as of the moment the
market reaches the given market size (on arrival). The average covers
only those markets holding at least one firm, so the laissez faire curve
begins where its first firms do. The thin horizontal line marks the
marginal cost of the frontier technology, \parameter{\frontierMarginalCost}.
The horizontal axis tick marks are identical to those in Figure
\ref{fig:exampleTwoAccumulatedSurplus}. The standard error of every plotted point is at most \parameter{\largestLeaderCostStandardError}.}

\end{figure}

The planner enters before demand is mature to reduce the costs of waiting for technological improvements. Figure \ref{fig:exampleTwoLeaderCost} displays these benefits by plotting the average (across markets) of the lowest marginal cost (across firms) evaluated at zero output. The sampling occurs at the moment the market attains that size, as with the solid lines in Figure \ref{fig:exampleTwoFirmCounts}.  Because the planner enters markets earlier, the planner's curve begins to the left of its laissez faire counterpart. Both curves attain the smallest possible marginal cost in the largest markets, but the planner's costs are substantially below those of private firms in mid-sized markets. 

Our comparison of planning and laissez faire outcomes has focused on differences in market structure: under what circumstances do firms enter and exit? We have paid relatively little attention to conduct and (static) performance. To learn how much of the surplus reduction would disappear if firms priced efficiently, we calculate the equilibrium of our model in which a regulator acts exactly as a Walrasian auctioneer. Each month, the regulator begins knowing both the number of customers and every firm's costs. The regulator calculates a market-clearing price, offers to purchase from the firms as many widgets as they desire to sell, and sells them at that price to the customers. This removes market power in pricing, but the firms still act strategically when choosing entry and exit. The average realized surplus after adopting this ideal product-market regulation is slightly \emph{less} than the laissez faire surplus. Efficient pricing therefore recovers none of the lost surplus, and it loses a little more. The loss lies in firms' entry and exit choices.

\subsection{Regulation with Dynamic Incentives}

Consider a two-stage model of an initially empty market, in which firms enter one at a time and then produce and sell. There, omniscient regulation achieves the planner's outcome by pricing widgets at their marginal cost and paying each producer the surplus its presence generates. We take this approach as our baseline scheme for the regulation of dynamic incentives. As in our earlier experiment, the regulator acts as a Walrasian auctioneer, so each producer keeps the market-clearing price on every unit it sells, less its costs. She also pays a firm the difference its presence makes to the surplus of everyone else in the market: its customers and its rival, if it has one. Producers make their own entry and exit decisions. Our code verifies that each regime's profits satisfy Assumption \ref{ass:monotoneBoundedProfits}. For our example, the average surplus gained under this scheme equals only \surplus{\regulationSurplus} on a flow basis. That is, \emph{optimal product-market regulation combined with the optimal dynamic incentives from a two-stage model underperforms laissez faire.} 

This scheme pays each firm, in every month, exactly its contribution to that month's total surplus. However, a firm's payment leaves out how its presence changes other firms' later entry and exit decisions. The planner takes these dynamic effects into account. In principle, we could use the planner's value function to construct state-contingent entry and exit fees that induce entry and most exit decisions to match her own. However, we choose to close this section with a cruder but quantitatively relevant experiment: we impose an entry fee on all new firms. The proceeds get rebated back to the market's customers as lump-sum transfers. The regulator continues with the optimal regulation for a two-stage model, with Walrasian pricing and these transfers. 

%

\begin{figure}[tp]
\centering
\begin{tikzpicture}
\begin{axis}[
  exampleTwoAxis,
  height=0.34\textheight,
  xmin=0, xmax=60,
  xtick={0, 10, 20, 30, 40, 50, 60},
  ymin=0, ymax=0.18,
  ytick={0, 0.04, 0.08, 0.12, 0.16},
  xlabel={Entry Fee},
  ylabel={Surplus},
]

\addplot[plannerLine, forget plot]
  coordinates {(0, \plannerSurplus) (60, \plannerSurplus)};
\addplot[marketLine, forget plot]
  coordinates {(0, \laissezFaireSurplus) (60, \laissezFaireSurplus)};

\addplot[solvedFeePoint]
  table[x=entryFee, y=flow]
  {compute/acy1a/surplusByEntryFee.dat};

\node[anchor=east, font=\small, inner sep=2pt] (plannerLevel)
  at (axis cs:34, 0.168) {Planner, \surplus{\plannerSurplus}};
\draw[->, shorten >=2pt] (plannerLevel.south east) -- (axis cs:41, 0.1525);
\node[anchor=east, font=\small, inner sep=2pt] (laissezLevel)
  at (axis cs:34, 0.055) {Laissez Faire, \surplus{\laissezFaireSurplus}};
\draw[->, shorten >=2pt] (laissezLevel.north east) -- (axis cs:41, 0.0775);

\end{axis}
\end{tikzpicture}
\caption{Average Surplus with Constant Entry Fees}
\label{fig:exampleTwoSurplusByEntryFee}

\vspace*{3pt}
\figurenote{\noindent Note: Each circle reports the expected surplus of a new
market when every entrant pays the fee on the horizontal axis. In all experiments, the regulator acts as a Walrasian auctioneer. Firms
keep the revenue their sales earn at the market-clearing price. Each month the
regulator also pays each of them the difference its presence makes to the
surplus of its customers and its rival.}
\end{figure}

Figure \ref{fig:exampleTwoSurplusByEntryFee} plots the results for our entry fee experiments. We vary the fee from zero to \counted{\largestEntryFee}. The welfare at zero equals the surplus from our motivating experiment, \surplus{\regulationSurplus}. The figure demonstrates that imposing a constant fee on a regulated product market might increase total surplus. Indeed, total surplus exceeds that produced under laissez faire at every fee we examine from \counted{\lowestBeatingFee} through \counted{\highestBeatingFee}. The highest surplus achieved on the grid of entry fees we examine equals \surplus{\bestEntryFeeSurplus}, which an entry fee of \counted{\bestEntryFee} achieves.

\section{Conclusion}
\label{sec:conclusion}

Because it studies only one example, our quantitative analysis of competition policy demonstrates possibility rather than empirical relevance. Nevertheless, the large welfare gains a constant entry fee delivered lead us to ask how widely this result applies to other economic environments. Thoroughly answering this question will require calculating thousands of counterfactual equilibria for hundreds of specifications. For Section \ref{sec:competitionPolicy}'s entry fee experiments, we computed \counted{\equilibriaTimed} equilibria with Algorithm \ref{alg:candidateEquilibrium}, one for each fee. We implemented the algorithm in MATLAB\textsuperscript{\textregistered} and ran it on one computer. Its operating system identifies its \ifnum\solvingSockets=1 processor as an \solvingProcessor\ with \counted{\solvingCoresPerSocket} cores\else\counted{\solvingSockets} processors as \solvingProcessor, with \counted{\solvingCoresPerSocket} cores each\fi. Computing one of these equilibria took \seconds{\secondsPerEquilibrium} seconds on one core, on average. Valuing it, which our welfare comparisons require, took a further \seconds{\secondsPerValuation} seconds. At the first rate, computing one hundred thousand equilibria would occupy this computer's \counted{\poolWorkers} cores for \parameter{\projectedDays} days. This extrapolation of our experience convinces us that our algorithm can carry out this more extensive analysis on a modest budget.

\clearpage
\newpage

\pdfbookmark[0]{References}{pdfbm:refs}
\bibliographystyle{ecta}
\bibliography{alljaap2}

\clearpage

\appendix
\section*{Appendix: Proofs of the Lemmas in Section \ref{sec:twoTypes}}

\begin{proof}[Proof of Lemma \ref{lem:twoTypesHighMonopolyBeatsDuopoly}]
Compare a type $2$ incumbent monopolist with a type $2$ incumbent duopolist facing a type $2$ rival, immediately after exit decisions in period $t_0$, each facing $C_{t_0}=c$ and sharing the same exogenous process $Y_t$. Let $\tau$ be the first period $t\geq t_0+1$ in which $a_S(2,2,Y_t)<1$, with $\tau=\infty$ if this never occurs. Before $\tau$, both duopolists survive for sure, and both keep type $2$, the highest type, because types cannot regress (Assumption \ref{ass:noRegress}). In equilibrium, the incumbent duopolist therefore earns $\pi(2,2,Y_t)$ in periods $t_0+1\leq t\leq\tau$ and pays $\kappa(Y_t)$ in periods $t_0+1\leq t<\tau$. If $\tau<\infty$, the duopolist exits with positive probability in period $\tau$, so exit is a best response. Its expected payoff from that period's survival stage onward therefore equals zero. The current expected discounted value of these profits less continuation costs equals $v_S(2,2,c)$.

The incumbent monopolist may deviate from its equilibrium strategy by surviving in periods $t<\tau$ and exiting in period $\tau$ if $\tau<\infty$. While active, that firm faces no rival, a type $1$ rival, or a type $2$ rival. Assumption \ref{ass:monotoneBoundedProfits}.\ref{ass:monotoneBoundedProfits:rivalType} therefore gives it profits weakly greater than $\pi(2,2,Y_t)$ through period $\tau$, and it pays the same continuation costs $\kappa(Y_t)$ strictly before $\tau$. It earns a continuation value of zero upon exit. The expected discounted payoff from this feasible action sequence therefore weakly exceeds $v_S(2,2,c)$. Because $v_S(2,0,c)$ equals the highest possible payoff to the incumbent monopolist, it weakly exceeds the payoff from this feasible action sequence. Therefore, $v_S(2,2,c)\leq v_S(2,0,c)$.
\end{proof}

\begin{proof}[Proof of Lemma \ref{lem:twoTypesLowSurvivalRule}]
Suppose first that $a_S(2,1,y) v_S(1,2,c)+(1-a_S(2,1,y)) v_S(1,0,c)>\kappa(y)$. Profit maximization then gives $a_S(1,2,y)=1$. Because the equilibrium is natural, $a_S(2,1,y)=1$. The left-hand side therefore equals $v_S(1,2,c)$, and $v_S(1,2,c)>\kappa(y)$. 

Next, assume that $v_S(1,2,c)>\kappa(y)$. We show that $v_S(1,0,c)\geq v_S(1,2,c)$. The argument follows the proof of Lemma \ref{lem:twoTypesHighMonopolyBeatsDuopoly}, with one addition: the type $1$ firm's own type may advance. Compare a type $1$ incumbent facing a type $2$ rival with a type $1$ incumbent monopolist, each starting immediately after exit decisions in period $t_0$ with $C_{t_0}=c$ and sharing the same processes $Y_t$ and $K_t$, with $K_{t_0}=1$. Let $\tau$ be the first period $t\geq t_0+1$ in which the duopolist might exit, $a_S(K_t,2,Y_t)<1$, with $\tau=\infty$ if this never occurs.

Before $\tau$, the duopolist survives for sure, and so does its rival. The rival keeps type $2$, because types cannot regress (Assumption \ref{ass:noRegress}). While the duopolist has type $1$, its rival survives because the equilibrium is natural: a type $2$ firm survives for sure whenever its type $1$ rival does. Once the duopolist has type $2$, its rival follows the same survival rule, $a_S(2,2,Y_t)=1$. The duopolist therefore earns $\pi(K_t,2,Y_t)$ in periods $t_0+1\leq t\leq\tau$ and pays $\kappa(Y_t)$ in periods $t_0+1\leq t<\tau$. As in the proof of Lemma \ref{lem:twoTypesHighMonopolyBeatsDuopoly}, its expected payoff from period $\tau$'s survival stage onward equals zero. Consequently, $v_S(1,2,c)$ equals the expected discounted value of these profits less continuation costs.

The monopolist may deviate from its equilibrium strategy by surviving in periods $t<\tau$ and exiting in period $\tau$ if $\tau<\infty$. This decision depends only on its own type $K_t$ and the exogenous state $Y_t$. While doing so it faces no rival or a rival of type at most $2$, so Assumption \ref{ass:monotoneBoundedProfits}.\ref{ass:monotoneBoundedProfits:rivalType} gives it a flow payoff weakly greater than $\pi(K_t,2,Y_t)$, and it pays the same $\kappa(Y_t)$ strictly before $\tau$. It earns zero upon exit. Because this action sequence is feasible, $v_S(1,0,c)$ weakly exceeds its payoff, which weakly exceeds $v_S(1,2,c)$. Combining $v_S(1,0,c)\geq v_S(1,2,c)$ with $v_S(1,2,c)>\kappa(y)$ yields $a_S(2,1,y) v_S(1,2,c)+(1-a_S(2,1,y)) v_S(1,0,c)>\kappa(y)$.
\end{proof}

\begin{proof}[Proof of Lemma \ref{lem:twoTypesCalculatedHighValueOrder}]
Because $\pi(2,0,y)\geq \pi(2,1,y) \geq \pi(2,2,y)$, the Bellman operator defined by the right-hand sides of \eqref{eq:twoTypesBellmanHighMonopoly} and \eqref{eq:twoTypesBellmanHighLow} maps the set of pairs of functions $f_0$ and $f_1$ satisfying $f_0(c)\geq f_1(c)\geq w_S(2,2,c)$ for every $c\in\mathcal{C}$ into itself.  Therefore, the operator's fixed point satisfies the first two inequalities. For the third, $\pi(1,2,y)\leq\pi(2,2,y)$, so comparing \eqref{eq:twoTypesBellmanHighHigh} and \eqref{eq:twoTypesBellmanLowHigh} shows that the operator defined by the right-hand side of \eqref{eq:twoTypesBellmanLowHigh} maps the set of functions $f$ satisfying $f(c)\leq w_S(2,2,c)$ for every $c\in\mathcal{C}$ into itself. Therefore, its fixed point satisfies $w_S(1,2,c)\leq w_S(2,2,c)$.
\end{proof}

\begin{proof}[Proof of Lemma \ref{lem:twoTypesLowMonopolyBeatsDuopoly}]
As in the proof of Lemma \ref{lem:twoTypesHighMonopolyBeatsDuopoly}, we give the monopolist a feasible plan that earns at least the duopolist's equilibrium payoff. Two things are new here: the firm's type may advance, and entry may give the monopolist a rival.

Compare a type $1$ incumbent monopolist with a type $1$ incumbent duopolist, each starting immediately after exit decisions in period $t_0$ with $C_{t_0}=c$ and sharing the same processes $Y_t$ and $K_t$, with $K_{t_0}=1$. Denote the (random) rival types faced by the monopolist and duopolist when period-$t$ profits accrue with $X_t^M$ and $X_t^D$. Let $\tau$ be the first period $t\geq t_0+1$ in which either (i) $K_t=2$ or (ii) $K_t=1$ and $a_S(1,X_t^D,Y_t)<1$, with $\tau=\infty$ if neither occurs. Before $\tau$, the duopolist survives for sure. Its rival also survives for sure: by symmetry when $X_t^D=1$, and because the equilibrium is natural when $X_t^D=2$. The duopolist's value $v_S(1,1,c)$ therefore equals the expected discounted value of its profits through period $\tau$, less its continuation costs before $\tau$, plus its payoff from period $\tau$'s survival stage and thereafter. The rivals' types start in order: $X_{t_0}^M=0<X_{t_0}^D=1$. Couple the two subgames so that this order persists, with $X_t^M\leq X_t^D$ through period $\tau$.\footnote{The coupling places the two subgames on a common probability space and draws the duopolist's rival's technology transitions so that its type never falls below that of any rival to the initial monopolist. Entry and exit cannot reverse the order, because an entrant has type $1$ and the duopolist's rival does not exit before $\tau$. The coupling is constructed sequentially: a deviating firm can simulate the duopolist's history from its own observed history and its private device, without using future information. The proof of Theorem \ref{th:uniqueness} in Online Supplement S2 \citep*{duopolysupp:abbringetal} presents such a construction in detail.} Applying Assumption \ref{ass:monotoneBoundedProfits} yields $\pi(1,X_t^M,Y_t)\geq \pi(1,X_t^D,Y_t)$ before $\tau$ and, if $\tau<\infty$, $\pi(K_\tau,X_\tau^M,Y_\tau)\geq\pi(K_\tau,X_\tau^D,Y_\tau)$ in period $\tau$.

The incumbent monopolist may deviate from its equilibrium strategy as follows. Before $\tau$, it survives for sure. If $\tau<\infty$ and $K_\tau=1$, it exits in period $\tau$. If $\tau<\infty$ and $K_\tau=2$, it follows its equilibrium strategy from the survival decision in period $\tau$ and thereafter.

Before $\tau$, the deviating monopolist earns weakly greater profits than the duopolist, and both pay the same continuation costs.

If $\tau<\infty$, let $\widehat X_\tau^M$ denote the type of any rival to the initial monopolist after period $\tau$'s entry stage. If $X_\tau^M>0$, then $\widehat X_\tau^M=X_\tau^M$; otherwise, $\widehat X_\tau^M$ equals either $0$ or $1$, depending on entry. The duopolist's market remains a duopoly through this stage, so $\widehat X_\tau^M\leq X_\tau^D$.

If $K_\tau=1$, the deviating monopolist exits and receives zero. Because $a_S(1,X_\tau^D,Y_\tau)<1$, the duopolist's equilibrium payoff from period $\tau$'s survival stage and thereafter also equals zero.

If $K_\tau=2$, the monopolist's expected payoff from that survival stage and thereafter is $v_E(2,\widehat X_\tau^M,Y_\tau)$, and the duopolist's is $v_E(2,X_\tau^D,Y_\tau)$. Steps 1 and 3 showed that $v_S(2,x,c)=w_S(2,x,c)$ for $x\in\{0,1,2\}$, so Lemma \ref{lem:twoTypesCalculatedHighValueOrder} gives $v_S(2,0,c)\geq v_S(2,1,c)\geq v_S(2,2,c)$. With $a=a_S(1,2,y)$, the post-entry values therefore satisfy
\begin{align*}
v_E(2,0,y)&=\max\{0,v_S(2,0,c)-\kappa(y)\}\\
&\geq\max\{0,a\,v_S(2,1,c)+(1-a)\,v_S(2,0,c)-\kappa(y)\}=v_E(2,1,y)\\
&\geq\max\{0,v_S(2,2,c)-\kappa(y)\}=v_E(2,2,y).
\end{align*}
Since $\widehat X_\tau^M\leq X_\tau^D$, $v_E(2,\widehat X_\tau^M,Y_\tau)\geq v_E(2,X_\tau^D,Y_\tau)$.

In both cases, the deviating monopolist's payoff from period $\tau$'s survival stage and thereafter weakly exceeds the duopolist's. Together with the profit comparison through period $\tau$ and equal continuation costs before $\tau$, the plan's expected discounted payoff weakly exceeds $v_S(1,1,c)$. No strategy earns the monopolist more than $v_S(1,0,c)$, so $v_S(1,0,c)\geq v_S(1,1,c)$.
\end{proof}

\begin{proof}[Proof of Lemma \ref{lem:twoTypesCandidateNatural}]
Suppose that $\alpha_S(1,2,y)>0$. Its construction in Step 2 implies that $w_S(1,2,c)>\kappa(y)$. Lemma \ref{lem:twoTypesCalculatedHighValueOrder} therefore gives
\[w_S(2,0,c)\geq w_S(2,1,c)\geq w_S(2,2,c)\geq w_S(1,2,c)>\kappa(y).\]
As both post-survival values in \eqref{eq:twoTypesSurvivalRuleHighLow} exceed $\kappa(y)$, $\alpha_S(2,1,y)=1$.
\end{proof}

\begin{proof}[Proof of Lemma \ref{lem:twoTypesCalculatedMonopolyValueOrder}]
The proof orders the entry rules, then the duopoly values, then the monopoly values.

First, comparing \eqref{eq:twoTypesBellmanLowHigh} and \eqref{eq:twoTypesBellmanLowLow} shows that the operator defined by the right-hand side of \eqref{eq:twoTypesBellmanLowLow} maps the set of functions $f$ satisfying $f(c)\geq w_S(1,2,c)$ for every $c\in\mathcal{C}$ into itself. Therefore, its fixed point satisfies $w_S(1,1,c)\geq w_S(1,2,c)$. This inequality, Assumption \ref{ass:entryBarriers}, and the construction of $\alpha_E(1,y)$ and $\alpha_E(2,y)$ therefore guarantee that $\alpha_E(1,y)\geq \alpha_E(2,y)$.

Next, we order the values of duopolists by their own technology types. Lemma \ref{lem:twoTypesCalculatedHighValueOrder} gives $w_S(2,0,c)\geq w_S(2,1,c)\geq w_S(2,2,c)\geq w_S(1,2,c)$. With these inequalities in hand, comparing \eqref{eq:twoTypesBellmanLowLow} and \eqref{eq:twoTypesBellmanHighLow} shows that the operator defined by the right-hand side of \eqref{eq:twoTypesBellmanLowLow} maps the set of functions $f$ satisfying $f(c)\leq w_S(2,1,c)$ for every $c\in\mathcal{C}$ into itself. Therefore, $w_S(1,1,c)\leq w_S(2,1,c)$.

Finally, the terms of \eqref{eq:twoTypesBellmanLowMonopoly} that follow the monopolist's advancement to type $2$ replicate the corresponding terms of \eqref{eq:twoTypesBellmanHighMonopoly}, because a type $1$ monopolist whose technology advances becomes a type $2$ monopolist. In the remaining terms, $\pi(1,0,y)\leq \pi(2,0,y)$, $\alpha_E(1,y)\geq\alpha_E(2,y)$, and $w_E(1,1,y)=\max\{0,w_S(1,1,c)-\kappa(y)\}\leq\max\{0,w_S(2,1,c)-\kappa(y)\}$. Together, these guarantee that the operator defined by the right-hand side of \eqref{eq:twoTypesBellmanLowMonopoly} maps the set of functions $f$ satisfying $f(c)\leq w_S(2,0,c)$ for every $c\in\mathcal{C}$ into itself. Therefore, $w_S(1,0,c)\leq w_S(2,0,c)$.
\end{proof}

\end{document}